\let\a=\alpha \let\b=\beta    \let\g=\gamma  \let\d=\delta     \let\e=\varepsilon
  \let\h=\eta     \let\th=\theta     \let\l=\lambda
\let\m=\mu                       \let\r=\rho
\let\s=\sigma       \let\ph=\varphi
\def\ee{{\underline \varepsilon}}
\let\io=\infty
\def\tm{\tilde{\m}}
\def\tp{\tilde{p}}
\def\tf{\tilde{f}}
\def\Hint{H'}
\def\Vint{V}
\def\Pint{P}
\def\Wint{W}
\def\vF{v_F}
\DeclareMathOperator{\sech}{sech}
\newcommand{\ppr}{^{\vphantom{\prime}}}
\newcommand{\wick}[1]{\left. :\! \hspace{-0.5pt} #1 \hspace{-0.5pt} \!: \right.}
\newcommand{\xxa}{\stackrel{\scriptscriptstyle \times}{\scriptscriptstyle \times}\!}
\newcommand{\xxe}{\!\stackrel{\scriptscriptstyle \times}{\scriptscriptstyle \times}}
\newcommand{\bosonnoord}[1]{\left. \xxa #1 \xxe \right.}
\def\tg{{\tilde g}}
\def\be{\begin{equation}}    \def\ee{\end{equation}}
\def\bea{\begin{eqnarray}}   \def\eea{\end{eqnarray}}
\def\bean{\begin{eqnarray*}} \def\eean{\end{eqnarray*}}
\def\bfr{\begin{flushright}} \def\efr{\end{flushright}}
\def\bc{\begin{center}}      \def\ec{\end{center}}
\def\bal{\begin{align}}      \def\eal{\end{align}}
\def\ba#1{\begin{array}{#1}} \def\ea{\end{array}}
\def\bd{\begin{description}} \def\ed{\end{description}}
\newtheorem{theorem}{Theorem}[section]
\newtheorem{lemma}{Lemma}[section]
\newtheorem{oss}[theorem]{Remark}
\begin{document}


\title{Steady states and universal conductance in a quenched Luttinger model}

\author{Edwin Langmann}
\email[Electronic address: ]{langmann@kth.se}
\affiliation{Department of Theoretical Physics, KTH Royal Institute of Technology, 106 91 Stockholm, Sweden}

\author{Joel L. Lebowitz}
\email[Electronic address: ]{lebowitz@math.rutgers.edu}
\affiliation{Departments of Mathematics and Physics, Rutgers University, Piscataway, New Jersey 08854, USA}

\author{Vieri Mastropietro}
\email[Electronic address: ]{vieri.mastropietro@unimi.it}
\affiliation{Dipartimento di Matematica, Universit{\`a} degli Studi di Milano, 20133 Milano, Italy}

\author{Per Moosavi}
\email[Electronic address: ]{pmoosavi@kth.se}
\affiliation{Department of Theoretical Physics, KTH Royal Institute of Technology, 106 91 Stockholm, Sweden}

\date{May 30, 2016}

\begin{abstract}
We obtain exact analytical results for the evolution of a 1+1-dimensional Luttinger model prepared in a domain wall initial state, i.e., a state with different densities on its left and right sides.
Such an initial state is modeled as the ground state of a translation invariant Luttinger Hamiltonian $H_{\lambda}$ with short range non-local interaction and different chemical potentials to the left and right of the origin.
The system evolves for time $t>0$ via a Hamiltonian $H_{\lambda'}$ which differs from $H_{\lambda}$ by the strength of the interaction.
Asymptotically in time, as $t \to \infty$, after taking the thermodynamic limit, the system approaches a translation invariant steady state.
This final steady state carries a current $I$ and has an effective chemical potential difference $\mu_+ - \mu_-$ between right- ($+$) and left- ($-$) moving fermions obtained from the two-point correlation function.
Both $I$ and $\mu_+ - \mu_-$ depend on $\lambda$ and $\lambda'$.
Only for the case $\lambda = \lambda' = 0$ does $\mu_+ - \mu_-$ equal the difference in the initial left and right chemical potentials.
Nevertheless, the Landauer conductance for the final state, $G = I/(\mu_+ - \mu_-)$, has a universal value equal to the conductance quantum $e^2/h$ for the spinless case.
\end{abstract}

\maketitle


\pagestyle{fancy}
\fancyhf{}
\fancyhead[RE, LO]{\nouppercase\leftmark}
\fancyhead[RO, LE]{\thepage}
\renewcommand{\headrulewidth}{0pt}
\renewcommand{\footrulewidth}{0pt}


\renewcommand{\thesection}{\arabic{section}}


\section{Introduction}
\label{Sec:Introduction}
The transport properties of a mesoscopic system are manifested in the evolution of its locally conserved quantities, such as particle and energy densities, following a quench from a non-uniform state.
In its simplest form, one prepares an isolated system in an initial state at time $t = 0$ with different density or temperature profiles to the left and right of the system, and then lets it evolve according to its internal translation invariant Hamiltonian.
The state of the system at a time $t > 0$ will then depend on the initial state and on the nature of the Hamiltonian.
After a long time, a system with ``good'' ergodic properties will forget the details of its initial state and come to \emph{thermal equilibrium} depending only on the total energy and on the number of particles of the initial state.
This is what is expected to be true for typical quantum systems.
The exceptions are integrable systems, in which there are many conserved quantities, and systems with many-body localization; see, e.g., \cite{EFG, PSSV, GHLT, GME, RDYO}.
For such systems there will still be an approach to some form of steady state, and this is sometimes called {\it equilibration} or {\it stabilization}.
(While we would prefer the latter terminology to distinguish from \emph{thermal equilibration}, also known as \emph{thermalization}, the former is used for reasons of convention.)

The time of approach to a steady state will depend on the size of the system.
Moreover, if one considers the entire system, there can never be a full loss of memory of the initial state.
After all, the evolution of an isolated quantum system is reversible, and for a finite quantum system the evolution is even quasi-periodic.
The approach to a steady state is therefore to be taken in a weak sense, i.e., one has to look at local (coarse-grained) quantities and wait a long time but not too long.
This should be equivalent to start with a finite system, say a one-dimensional system on the interval $[-L/2, L/2]$ for $L > 0$ prepared in an initial state which is different to the left and right of the origin, evolve the system for a time $t$, and consider its state projected on a subsystem taken as the interval $[-\ell, \ell]$ for $L > \ell > 0$, followed by first letting $L \to \io$ and then $t \to \io$ while keeping $\ell$ fixed but arbitrary.
In this way we can expect to obtain a steady state, described by a density matrix on the interval $[-\ell, \ell]$, and then ask for the density profile and the current in this final steady state \cite{SL, JP1, JP2, AJPP}.

To anticipate the properties of a final steady state in this set-up we consider first a system of length $L > 0$ in contact with infinite reservoirs at its left and right boundaries with different fixed chemical potentials or temperatures $(\m_L, T_L)$ and $(\m_R, T_R)$, respectively.
The coupling between system and reservoirs is done stochastically for classical systems \cite{RLL, Dh, BLRb, LLP, BO, BOS, RLL}, and for quantum systems one uses Lindblad-type operators \cite{JP1, JP2, AJPP, Li}.
Such a system will in general approach a steady state for fixed $L$.
We can then define the \emph{electrical conductivity} $\s$ as the ratio of the steady particle current $I$ to the average gradient $(\m_L-\m_R)/L$:
\be
\s = \frac{I}{(\m_L-\m_R)/L}.
\label{df}
\ee
In the same way the \emph{thermal conductivity} is defined as the ratio of the steady heat current to $(T_L-T_R)/L$.
In general, $\s \sim L^\a$ as $L \to \io$, with $\a$ depending on the type of system \cite{RLL, Dh, BLRb, LLP, BO, BOS, LMP, AJPP, AjHu}.
Several cases can be distinguished:
\begin{enumerate}

\item $\a = 0$ (normal conductivity).
In this case the system obeys \emph{Fourier's law}.
This has been shown, so far, only for classical systems with non-momentum conserving stochasticity in the bulk dynamics \cite{BO, BOS}.
\label{case_normal_conductivity}

\item $\a = 1$ (perfect conductivity).
This is the case, e.g., for fully integrable systems such as the harmonic crystal \cite{RLL} or free fermions \cite{AJPP}.
These systems have freely moving particles that carry the current.
\label{case_perfect_conductivity}

\item  $0 < \a < 1$ (enhanced conductivity).
This is expected to be the case for anharmonic one-dimensional systems with momentum conserving interactions as, e.g., in the Fermi-Pasta-Ulam system.
The only case where this has been proven rigorously, with $\a = 1/2$, is for a classical harmonic chain with momentum conserving stochastic interactions \cite{LMP}.
\label{case_enhanced_conductivity}

\item  $\a = -\io$ (zero conductivity). More precisely, $\s \to 0$ exponentially in $L$.
This is the case when one has localization as in a harmonic chain with random pinnings \cite{AjHu}.
Here too the result is for a classical system coupled to the reservoirs via Langevin terms.
We expect, however, this to be the same for quantum systems.
\label{case_zero_conductivity}

\end{enumerate}
We note that in cases \ref{case_normal_conductivity} and \ref{case_enhanced_conductivity} it is not clear a priori how to model the stochastic interactions for a quantum system.
This is an important open problem.

Going back to the isolated system of interest here, we expect the behavior of the subsystem on $[-\ell, \ell]$ in the limit $L \to \io$ followed by the limit $t \to \io$ to be as follows.
In case~\ref{case_normal_conductivity}, the subsystem will be in a thermal equilibrium state with vanishing current.
In case~\ref{case_perfect_conductivity}, it will be in a steady state that is translation invariant and has a non-vanishing current.
In case~\ref{case_enhanced_conductivity}, it will again be in a translation invariant steady state but without any current.
In case~\ref{case_zero_conductivity}, it will be in a steady state that maintains the initial profile, or something close to this, and thus will not be translation invariant or have a current.

Case~\ref{case_perfect_conductivity} in the above general classification can be checked in concrete quantum models, simple enough to be accessible by analytical or numerical methods.
Examples include quantum XX spin chains \cite{ARRS, RDYO} describing free fermions; see also \cite{ABGM2} for closely related work. 
In this case, the absence of interaction makes the system solvable, and, starting from a domain wall density profile, one gets a final steady state carrying a non-vanishing current.
The XXZ model is an extension of the XX model describing interacting fermions, which is exactly solvable by Bethe ansatz.
However, despite interesting recent progress \cite{LN}, it still seems unclear if this solution can be used to acquire full information on the evolution of domain walls.
Indeed, existing results on the evolution of quantum XXZ spins chains from a domain wall state are mostly numerical or based on approximations \cite{SM, LaMi}.

Here we shall consider a model for interacting (spinless) fermions that is more accessible to an analytical investigation, namely the Luttinger model \cite{Lu,To,ML} with a non-local interaction.
The first correct solution of this model and its ground state two-point correlation function were obtained in \cite{ML}, and the evolution after a quench from a homogenous state was studied in \cite{C, IC} for a local interaction and in \cite{MW} for a non-local interaction.
We also note that, concerning its equilibrium properties, the Luttinger model is a prototype for a larger equivalence class of one-dimensional systems called Luttinger liquids \cite{Ha}.

The Luttinger Hamiltonian
\begin{multline}
H_{\l} = \sum_{r=\pm} \int_{-L/2}^{L/2} dx\,
	\! \wick{ \tilde\psi^+_{r}(x)
		\left( -ir \vF \partial_x
			- \m_0 \left( 1 + \frac{2\l}{\pi\vF} \int_{-{L/2}}^{L/2} dy \Vint(y) \right)
		\right) \tilde\psi^-_{r}(x) } \\
+ \l \sum_{r,r'=\pm} \int_{-{L/2}}^{L/2} dx\, dy\, \Vint(x-y)
	\left(
	\! \wick{ \tilde\psi^+_{r\ppr}(x) \tilde\psi^-_{r\ppr}(x) } \!
	\! \wick{ \tilde\psi^+_{r'}(y) \tilde\psi^-_{r'}(y) } \!
	- \left( \frac{\m_0}{2\pi\vF} \right)^2
	\right)
\label{hamiltonian}
\end{multline}
describes right- and left-moving fermions on a line for $r = +$ and $r = -$, respectively, given~by fields $\tilde\psi^-_{r}(x)$ and $\tilde\psi^+_{r}(x) = \tilde\psi^-_{r}(x)^{\dagger}$ satisfying antiperiodic boundary conditions and the canonical anticommutation relations
\be
\left\{ \tilde\psi^-_{r}(x), \tilde\psi^+_{r'}(x') \right\}
	= \d_{r,r'} \d(x-x'),
\qquad
\left\{ \tilde\psi^\pm_{r}(x), \tilde\psi^\pm_{r'}(x') \right\}
	= 0,
\label{psi_tilde_CAR}
\ee
with $\wick{\cdots}$ indicating \emph{Wick} (\emph{normal}) \emph{ordering}, \emph{Fermi velocity} $v_F$, \emph{chemical potential} $\m_0$, \emph{coupling constant} $\l$, and a short range non-local \emph{interaction potential} $V(x-y)$; see, e.g., \cite{V, BGM, LaMo} and references therein.
The chemical potential $\m_0$ corresponds to the filled Dirac sea, and we adopt the description where this is taken as the ground state.
This, however, means that there can be both positive and negative densities, which should be interpreted as relative densities to a large constant ground state density.
We consider an initial state with different density profiles to the left and right of the system, modeled as the ground state $|\Psi_{\l,\m}\rangle$ of the Hamiltonian
\be
H_{\l,\m} = H_{\l} - \sum_{r=\pm} \int_{-L/2}^{L/2} dx\, ( \m_L \th(-x) + \m_R \th(x) - \m_0 )
\left( \wick{ \tilde\psi^+_{r}(x) \tilde\psi^-_{r}(x) } - \frac{\m_0}{2\pi\vF} \right)
\label{hamiltonian_with_muL_and_muR}
\ee
with different chemical potentials to the left, $\m_L = \m_0 + \m/2$, and right, $\m_R = \m_0 -\m/2$, assuming for definiteness $\m_L > \m_R$, i.e., $\m > 0$ as illustrated in Fig.~\ref{Fig:External_field}.
We consider the evolution of this state under a Luttinger Hamiltonian $H_{\l'}$ with a new coupling constant $\l'$, i.e., we consider the state $|\Psi_{\l,\m}^{\l'}(t)\rangle = e^{-iH_{\l'}t}|\Psi_{\l,\m}\rangle$.
In the case $\l = \l'$ this corresponds to an experiment in which one switches off an external field producing an excess of density on one side of the system compared to the other at time $t = 0$ and considers the evolution of the system under the translation invariant Hamiltonian $H_{\l}$ for time $t > 0$.
On the other hand, if $\l \neq \l'$, there is in addition an interaction quench, i.e., at $t=0$ there is also a change from $\l$ to $\l'$.

Let us first consider the non-interacting case $\l = \l' = 0$.
By taking the limit $L \to \io$ followed by the limit $t \to \io$, the system reaches a final steady state that is translation invariant and has a non-vanishing current that is linear in $\m_L - \m_R$.
The steady state has the same two-point correlation function as the ground state of a system of non-interacting fermions with different chemical potentials $\m_\pm = \m_0 \pm \m/2$ for right- ($+$) and left- ($-$) moving fermions, as illustrated in Fig.~\ref{Fig:Fermi_sea}.
Thus, at $t=0$ there is an excess of density on the left side compared to the right, and, asymptotically in time, there is a steady current corresponding to more right-moving fermions (coming from the left) than left-moving fermions (coming from the right).
The current satisfies the following relation in the non-interacting case:
\be
I = \frac{e^2}{h} (\m_L - \m_R) = \frac{e^2}{h} (\m_+ - \m_-),
\ee
where $e^2/h$ is the conductance quantum for the spinless case.
The conductivity defined in \eqref{df} is therefore diverging linearly with $L$ (corresponding to case~\ref{case_perfect_conductivity} in our general classification).

\begin{figure}[!htbp]
	\centering
		\begin{minipage}{.48\textwidth}
		\centering
		\includegraphics[scale=1]{./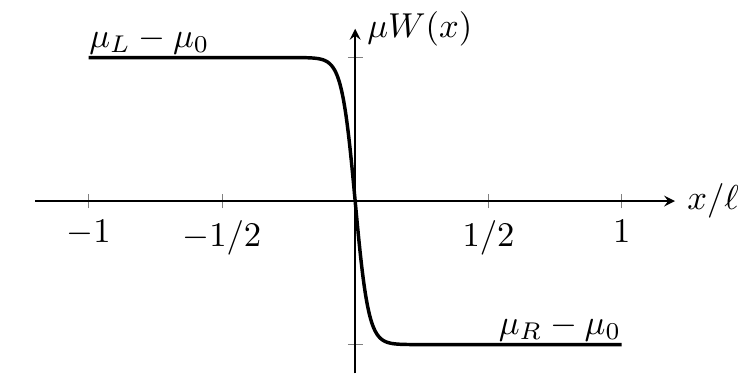}
		\caption{Domain wall initial state, with the left chemical potential $\m_L - \m_0$ larger than the right chemical potential $\m_R - \m_0$, produced by an external field $\Wint(x)$ with a smooth transition between the two sides of the system.}
		\label{Fig:External_field}
	\end{minipage}
	\hfill
	\begin{minipage}{.48\textwidth}
		\centering
		\includegraphics[scale=1]{./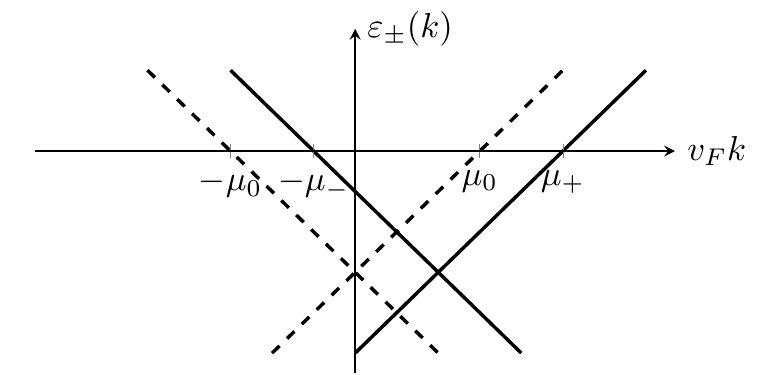}
		\caption{Fermi sea for the final state given by the linear dispersion relations $\e_\pm(k) = \pm\vF k - \m_\pm$ for right- ($+$) and left- ($-$) moving fermions starting from a domain wall state, $\m > 0$ (solid lines), and from a uniform state, $\m = 0$ (dashed lines).}
		\label{Fig:Fermi_sea}
	\end{minipage}
\end{figure}

In this paper we address the question of how the interaction modifies the above picture.
The results are presented in detail in Sec.~\ref{Sec:Results}.
As before, the system reaches a steady state that is translation invariant and has a non-vanishing current that is still linear in $\m_L - \m_R$ but depends on the Hamiltonian driving the evolution and on the initial state.
However, the final steady state obtained as $t \to \io$ has different chemical potentials $\m_+$ and $\m_-$ for right- and left-moving fermions, respectively, obtained from the two-point correlation function.
While $\m_L - \m_R = \m_+ - \m_-$ without interaction, this is not true in the interacting case:
\be
I = G_{\l,\l'} (\m_L - \m_R) = \frac{e^2}{h} (\m_+ - \m_-),
\label{I_Gll_mLMR_G_mpmm}
\ee
where $G_{\l,\l'}$ is independent of $\m_L - \m_R$ but is a non-trivial function of the microscopic parameters $\l$ and $\l'$.
The approach to this steady state is somewhat similar to the non-interacting case: we will show that the system evolves \emph{ballistically} but that the non-local interaction produces \emph{dispersion} effects, similar to what is observed in numerical simulations for quantum XXZ spin chains in \cite{SM}.
When $\l=\l'$, the ratio $G_{\l,\l'}$ between $I$ and $\m_L - \m_R$ reduces to the value computed at equilibrium in \emph{linear response theory}, $G_{\l,\l} = K_{\l} e^2/h$ \cite{KF}, where $K_{\l}$ is the so-called Luttinger parameter; this was also found by numerical simulations in \cite{SM}.

We find the relation in \eqref{I_Gll_mLMR_G_mpmm} remarkable for the following reasons.
It says that the ratio of the steady current to the difference between left and right chemical potentials is highly \emph{non-universal} (since it depends on the details of the initial state and the Hamiltonian driving the evolution).
On the other hand, the ratio of the steady current to the difference between chemical potentials of right- and left-moving fermions for the final state is perfectly \emph{universal}.
The reason is that both $I$ and $\m_+ - \m_-$ are renormalized -- a quantum many-body effect due to the interaction between the particles -- but with the property that the renormalizations precisely cancel if one takes their ratio.
This means that the Landauer conductance \cite{La2, ACF2} for the final state is universal, which confirms previous results obtained in near-to-equilibrium approaches; see, e.g., \cite{ACF1, ACF2, MS, K}.
However, we stress that we show this universality in a dynamical model of interacting fermions by a fully non-equilibrium approach.

Further understanding of the final steady state can be gained by studying the correlation functions.
When $\l=\l'$ this state has the same two-point correlation function as the ground state of a Luttinger Hamiltonian similar to $H_{\l}$ but with different chemical potentials $\m_\pm$ for right- and left-moving fermions.
This suggests that it corresponds to a generalized canonical ensemble \cite{RDYO}.
However, this is very different if $\l \neq \l' \neq 0$, where a different steady state is reached which cannot be described as the ground state of some Luttinger Hamiltonian (this can be seen by, e.g., studying non-equal-time correlation functions; cf.\ also \cite{IC}).
This steady state, which has to be a function of the constants of motion of $H_{\l'}$, has peculiar properties: its two-point correlation function has exponents which are non-trivial functions of $\l$ and $\l'$ and thus different from the equilibrium exponents.
These non-equilibrium exponents reduce to the equilibrium ones computed in \cite{ML} when $\l=\l'$ and to the ones in \cite{IC} when $\l=0$ or $\l' = 0$, but the general expressions are to our knowledge new.
Whether this state also corresponds to a generalized canonical ensemble is left open.

The paper is organized as follows.
In Sec.~\ref{Sec:Results} we formulate the model and state the results.
In Sec.~\ref{Sec:Exact_solution_of_the_Luttinger_model} we review the solution of the Luttinger model in \cite{ML}.
In Secs.~\ref{Sec:Luttinger_model_with_an_external_field}--\ref{Sec:Approach_to_steady_state} we prove our results by, first, solving the Luttinger model with an external field, second, quenching the system, and third, studying the approach to a steady state; for the latter we use a mathematical result stated and proved in Appendix~\ref{App:Asymptotics}.
In Sec.~\ref{Sec:Luttinger_model_with_constant_chemical_potentials} we prove that, for $\l = \l'$, the two-point correlation function for the final steady state is the same as that of a Luttinger model at equilibrium with different constant chemical potentials for right- and left-moving fermions; this is extended in Appendix~\ref{App:General_interactions} to more general interactions.
Sec.~\ref{Sec:Concluding_remarks} contains concluding remarks, including a discussion of the conductance in the Luttinger model.


\section{Formulation and results}
\label{Sec:Results}
We study the evolution and approach to steady state of a system of right- and left-moving interacting (spinless) fermions on a line with length $L > 0$ described by the Luttinger model with a short range non-local interaction potential $V(x)$.
The system is put out of equilibrium by switching off an external field producing an excess of density on one side compared to the other.
The time evolution is given by the Hamiltonian in \eqref{hamiltonian} with fields $\tilde\psi^\pm_{r}(x)$, which is equivalent to
\begin{multline}
H_{\l} = \sum_{r=\pm} \int_{-L/2}^{L/2} dx\,
	\! \wick{ \psi^+_{r}(x) \left(-ir \vF \partial_x \right) \psi^-_{r}(x) } \\
+ \l \sum_{r,r'=\pm} \int_{-{L/2}}^{L/2} dx\, dy\, \Vint(x-y)
	\! \wick{ \psi^+_{r\ppr}(x) \psi^-_{r\ppr}(x) } \!
	\! \wick{ \psi^+_{r'}(y) \psi^-_{r'}(y) } \!
\label{hamiltonian1}
\end{multline}
with fields $\psi^\pm_{r}(x) = L^{-1/2} \sum_{k}  a^{\pm}_{r,k} e^{\mp ikx}$ for $r = \pm$ and $k={\pi (2n+1)}/{L}$ with $n \in \mathbb{Z}$ (corresponding to antiperiodic boundary conditions) describing right- ($r = +$) and left- ($r = -$) moving fermions, where $ a^{\pm}_{r,k}$ are fermion creation and annihilation operators; this follows from the identities
\be
\tilde\psi^{\pm}_{r}(x)
= e^{\mp ir\vF^{-1} \m_0 x} \psi^{\pm}_{r}(x),
\qquad
\wick{ \tilde\psi^+_{r}(x) \tilde\psi^-_{r}(x) }
= \wick{ \psi^+_{r}(x) \psi^-_{r}(x) } + \frac{\m_0}{2\pi\vF}
\label{psi_tilde_to_psi}
\ee
(the proof is given at the end of Sec.~\ref{Sec:Luttinger_model_with_constant_chemical_potentials}).
Important (local) observables we consider are the densities $\r_\pm(x)= \wick{ \psi^+_\pm(x) \psi^-_\pm(x) }$, which satisfy periodic boundary conditions if $L$ is finite.
We define the total density as $\r(x)=\r_+(x)+\r_-(x)$ and the current as $j(x)= \vF (\r_+(x)-\r_-(x))$.
In Appendix~\ref{App:General_interactions} we show that these definitions are consistent with the continuity equation $\partial_t \r(x,t) + \partial_x j(x,t) = 0$ where $\r(x,t) = e^{iH_{\l} t} \r(x) e^{-iH_{\l} t}$ and similarly for $j(x,t)$.
We note that the second identity in \eqref{psi_tilde_to_psi} makes clear how the densities should be interpreted, namely as relative densities around the large ground state density of the filled Dirac sea determined by the Fermi momentum $\m_0/\vF$.
This explains why both positive and negative densities are allowed.

We require the following conditions on the Fourier transform $\hat{\Vint}(p) = \int_{-L/2}^{L/2} dx\, \Vint(x) e^{-ipx}$ of the interaction potential to be satisfied:
\be
\begin{aligned}
& 1. \quad \hat{\Vint}(p) = \hat{\Vint}(-p) \quad \forall p,\\
& 2. \quad \l \hat{\Vint}(p) > - \pi \vF/2 \quad \forall p,\\
& 3. \quad \hat{\Vint}(p) |p|^{1+\e} \to 0 \;\,
		\textnormal{as} \;\, p \to \pm\io \;\, \textnormal{for some} \;\, \e > 0.
\end{aligned}
\label{V_conditions_1}
\ee
We will show that the second condition (cf.\ \cite{ML}) ensures that the system is stable and that the third is needed for the interacting and non-interacting fermion Fock spaces to be unitarily equivalent.

In Sec.~\ref{Sec:Exact_solution_of_the_Luttinger_model} we make the Luttinger model mathematically precise by defining it in Fourier space (cf.\ Remark~\ref{Remark:Thermodynamic_limit}).
In particular, as is well-known, there is a unique Hilbert space, the \emph{fermion Fock space} $\mathcal{F}$,  defined by the canonical anticommutation relations and with \emph{vacuum} $|\Psi_0\rangle$ given by the ground state of $H_{0}$ (the relations determining $\mathcal{F}$ are given in \eqref{a1} and \eqref{a2}).
Moreover, the Luttinger Hamiltonian $H_{\l}$ is a well-defined self-adjoint operator on $\mathcal{F}$ bounded from below, with pure point spectrum, and with a non-degenerate ground state; see, e.g., \cite{LaMo} for a recent review of the pertinent mathematical results.

The equilibrium properties of the Luttinger model are well known and the ground state correlation functions can be exactly computed \cite{ML}.
If we let $|\Psi_{\l} \rangle$ denote the ground state of $H_{\l}$, then, in the \emph{thermodynamic limit} $L\to\io$, the two-point correlation function is
\be
\langle\Psi_{\l}| \psi^+_{r}(x) \psi^-_{r}(y) |\Psi_{\l}\rangle
= \frac{i}{2\pi r(x-y)+i0^+}
	\exp \left( \int_0^\io dp \frac{\h_{\l}(p)}{p} (\cos p(x-y) - 1)\right)
\label{ss}
\ee
with $\h_{\l}(p) = (1 - [\l\hat{\Vint}(p)/(\pi\vF + \l\hat{\Vint}(p))]^2)^{-1/2} - 1$.
As emphasized, this result (derived in \cite{ML} for a different interaction) is in the thermodynamic limit; this is motivated by our interest in length scales much smaller than the system size as discussed in Sec.~\ref{Sec:Introduction}.
For later reference we also note that \eqref{psi_tilde_to_psi} implies
\be
\langle\Psi_{\l}| \tilde\psi^+_{r}(x) \tilde\psi^-_{r}(y) |\Psi_{\l}\rangle
= e^{-ir\vF^{-1} \m_0 (x-y)}
	\langle\Psi_{\l}| \psi^+_{r}(x) \psi^-_{r}(y) |\Psi_{\l}\rangle.
\label{ss_tilde}
\ee
The interaction produces a dramatic modification in the long distance decay of the correlation functions, with the appearance of an \emph{anomalous} exponent $\h_{\l} = \h_{\l}(0)$, where, for large $|x-y|$, the two-point correlation function decays as $O(|x-y|^{-1-\h_{\l}})$.

We investigate the non-equilibrium properties of this system when the initial state has an excess of density on one side compared to the other.
We choose as initial state the ground state of the Hamiltonian in \eqref{hamiltonian_with_muL_and_muR}, which, using \eqref{psi_tilde_to_psi}, is equivalent to
\be
H_{\l,\m} = H_{\l} - \m \int_{-L/2}^{L/2} dx\, \Wint(x)\r(x),
\label{hamiltonian_with_ext_field}
\ee
where $\Wint(x)$ is an external field taken as a regularized version of $1/2-\th(x)$ (with the Heaviside function $\th(x) = 1$ for $x > 0$ and $\th(x) = 0$ for $x < 0$) as depicted in Fig.~\ref{Fig:External_field}.
For finite $L$, we need to take the periodic boundary conditions into account, and for this we use
\be
\Wint(x) = - \frac{1}{2} \left( \tanh (x/\d) - \tanh((2x+L)/2\d) - \tanh((2x-L)/2\d) \right)
\label{WxL}
\ee
for some small $\d > 0$.
In Sec.~\ref{Sec:Luttinger_model_with_an_external_field} we make this model well-defined by again working in Fourier space.
By explicitly constructing all eigenstates and the corresponding eigenvalues of $H_{\l,\m}$ we show the following:

\begin{theorem}
\label{Thm:H_lambda_mu}
For finite $L$, the Hamiltonian $H_{\l,\m}$ in \eqref{hamiltonian_with_ext_field}, with $\Vint(x)$ satisfying the conditions in \eqref{V_conditions_1} and $\Wint(x)$ in \eqref{WxL}, defines a self-adjoint operator on the fermion Fock space $\mathcal{F}$.
This operator is bounded from below, has pure point spectrum, and a non-degenerate ground state.
\end{theorem}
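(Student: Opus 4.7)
The plan is to diagonalize $H_{\lambda,\mu}$ explicitly on the fermion Fock space $\mathcal{F}$ by combining Mattis-Lieb bosonization with a Bogoliubov rotation and a Weyl (coherent-state) displacement, and then read off the four claims from the resulting diagonal form. I will take as given the bosonic description of $H_\lambda$ reviewed in Sec.~\ref{Sec:Exact_solution_of_the_Luttinger_model}: the Fourier modes $\hat\rho_r(p) := \int_{-L/2}^{L/2}\!dx\,e^{-ipx}\,\wick{\psi^+_r(x)\psi^-_r(x)}$ satisfy bosonic commutation relations for $p\ne 0$, while the charge operators $N_\pm := \hat\rho_\pm(0)$ commute with them and with each other. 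Moreover $H_\lambda$ is quadratic in these variables and is diagonalized by a Bogoliubov rotation $U_\lambda$, unitary on $\mathcal{F}$ by Shale's criterion under the first two conditions of \eqref{V_conditions_1}, giving
\begin{equation*}
U_\lambda H_\lambda U_\lambda^\dagger
    = E_\lambda + H^{\textnormal{zm}}_\lambda(N_+,N_-)
    + \sum_{p\ne 0}\varepsilon_\lambda(p)\,B_p^\dagger B_p^{\vphantom{\dagger}},
\end{equation*}
with dispersion $\varepsilon_\lambda(p)\sim v_F|p|$ at large $|p|$ and a non-degenerate quadratic zero-mode form $H^{\textnormal{zm}}_\lambda$ on $\mathbb{Z}^2$.

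The only new ingredient is the external-field perturbation $V_{\textnormal{ext}} := -\mu\int\!dx\,W(x)\rho(x) = -\mu\sum_p \hat W(-p)\bigl(\hat\rho_+(p)+\hat\rho_-(p)\bigr)$. One verifies directly from \eqref{WxL} that $W$ is smooth, periodic and \emph{odd}, so $\hat W(0) = 0$ and $\hat W(p)$ decays faster than any inverse polynomial in $|p|$; in particular $V_{\textnormal{ext}}$ has no charge-zero-mode component, and $U_\lambda V_{\textnormal{ext}} U_\lambda^\dagger = \sum_{p\ne 0}\bigl(\beta_p B_p^\dagger + \bar\beta_p B_p^{\vphantom{\dagger}}\bigr)$ is linear in the new bosons, with coefficients $\beta_p$ inheriting the rapid decay of $\hat W(p)$. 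Completing the square by the Weyl displacement $U_W := \exp\!\bigl(\sum_{p\ne 0}(\alpha_p B_p^\dagger - \bar\alpha_p B_p^{\vphantom{\dagger}})\bigr)$ with $\alpha_p := -\beta_p/\varepsilon_\lambda(p)$ then yields
\begin{equation*}
U_W U_\lambda H_{\lambda,\mu} U_\lambda^\dagger U_W^\dagger
    = E_\lambda - \sum_{p\ne 0}\frac{|\beta_p|^2}{\varepsilon_\lambda(p)}
    + H^{\textnormal{zm}}_\lambda(N_+,N_-)
    + \sum_{p\ne 0}\varepsilon_\lambda(p)\,B_p^\dagger B_p^{\vphantom{\dagger}}.
\end{equation*}

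The main obstacle is to verify that $U_W$ is a genuine unitary on $\mathcal{F}$ and that the subtracted constant is finite; both amount to square-summability of $\alpha_p$ and of $\alpha_p\sqrt{\varepsilon_\lambda(p)}$, respectively. Since at finite $L$ the momenta $p = 2\pi n/L$ form a discrete set with no accumulation at $0$, only the ultraviolet regime is at stake, and the rapid decay of $\hat W(p)$ together with $\varepsilon_\lambda(p)\sim |p|$ at large $|p|$ (which uses also the third condition in \eqref{V_conditions_1}) makes both sums absolutely convergent. The right-hand side of the last display is then manifestly self-adjoint on its natural domain, bounded from below, and has pure point spectrum consisting of non-negative integer combinations of $\{\varepsilon_\lambda(p)\}_{p\ne 0}$ shifted by values of $H^{\textnormal{zm}}_\lambda$ over $\mathbb{Z}^2$; its infimum is attained on the unique vector $|N^*_+,N^*_-\rangle\otimes|0_B\rangle$, where $(N^*_+,N^*_-)$ is the minimizer of $H^{\textnormal{zm}}_\lambda$. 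Conjugating back by $U_\lambda^\dagger U_W^\dagger$ transfers self-adjointness, semiboundedness, pure-point spectrum, and non-degeneracy of the ground state to $H_{\lambda,\mu}$ on $\mathcal{F}$, proving the theorem.
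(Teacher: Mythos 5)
Your proposal is correct and follows essentially the same route as the paper: the paper also diagonalizes $H_{\l,\m}$ by composing the Bogoliubov transformation $e^{iS_{\l}}$ with a displacement operator $e^{iA_{\l,\m}}$ linear in the densities (your $U_W$, just applied in the opposite order), chosen to cancel the terms linear in $\r_r(p)$, and reads off all four claims from the resulting explicit eigenbasis and eigenvalues. Your square-summability checks correspond exactly to the paper's finiteness condition \eqref{mu_requirement} on $\sum_p L^{-1} K_{\l}(p)^2 v_{\l}(p)^{-2} |\hat{\Wint}(p)|^2$.
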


From Sec.~\ref{Sec:Introduction} we recall that the ground state of $H_{\l,\m}$ is denoted by $|\Psi_{\l,\m} \rangle$, and we consider the evolution of this state under a different Hamiltonian $H_{\l'} = H_{\l',0}$ with coupling constant $\l'$ and no external field, i.e., we quench the system and consider
\be
|\Psi_{\l,\m}^{\l'}(t)\rangle = e^{-iH_{\l'}t}|\Psi_{\l,\m}\rangle.
\label{Psi_lp_l_m}
\ee
We show that this state tends to a steady state as $t \to \io$ by analytically computing the expectation values of certain observables.
It is at the level of these expectation values that we pass to the thermodynamic limit (cf.\ Remark~\ref{Remark:Thermodynamic_limit}), and in this limit we use the regularized external field $\Wint(x) = -(1/2) \tanh(x/\d)$.
This external field has the Fourier transform
\be
\hat{\Wint}(p) = \frac{i\pi\d}{2\sinh(\pi\d p/2)},
\label{reg_Wint_Fourier}
\ee
and letting $\d \to 0^+$ yields $\hat{\Wint}(p) = ip^{-1}$ which is the Fourier transform of $1/2-\th(x)$ (interpreted in a distributional sense as a Cauchy principal value).
We also impose the following conditions on the interaction potential in order to derive exact results for the asymptotical behavior of the system: 
\be
\begin{aligned}
& 1. \quad \hat{\Vint}(p) \in C^{2}(\mathbb{R}) \;\, \textnormal{(a.e.)},\\
& 2. \quad \hat{\Vint}(p),\;
		d\hat{\Vint}(p)/dp,\;
		d^2\hat{\Vint}(p)/dp^2 \in L^{1}(\mathbb{R}),\\
& 3. \quad \l p\, d\hat{\Vint}(p)/dp > -\pi\vF - 2\l\hat{\Vint}(p) \quad \forall p.
\end{aligned}
\label{V_conditions_2}
\ee
As will be shown, the third condition means that the system evolves with a positive group velocity (cf.\ \eqref{vG}).

\begin{oss}
\label{Remark:Thermodynamic_limit}
Following the approach described in Sec.~\ref{Sec:Introduction}, we only define the Luttinger model for finite $L$.
This allows for a simple rigorous construction of the model in Fourier space; see, e.g., \cite{LaMo}.
It is only for expectation values of observables, after we have computed them for finite $L$, that we pass to the thermodynamic limit $L \to \io$.
To make this clear, we use $\langle \cdot \rangle_{L}$ to denote expectation values for finite $L$ and write $\langle \cdot \rangle = \lim_{L\to\io} \langle \cdot \rangle_{L}$, and similarly for other quantities.
\end{oss}

We first consider the case without interaction between the fermions, i.e., $\l=\l'=0$.
The results follow as special cases from the proofs in Sec.~\ref{Sec:Evolution_following_a_quench}.
For the total density and the current we show that
\begin{align}
\langle\Psi_{0,\m}^{0}(t)| \r(x ) |\Psi_{0,\m}^{0}(t)\rangle
	& = \frac{\m}{2\pi\vF} \left( \Wint(x - \vF t) + \Wint(x + \vF t) \right),
	\label{Rxt0} \\
\langle\Psi_{0,\m}^{0}(t)|j(x)|\Psi_{0,\m}^{0}(t)\rangle
	& = \frac{\m}{2\pi} \left( \Wint(x - \vF t) - \Wint(x + \vF t) \right).
	\label{Ixt0}
\end{align}
If $\Wint(x) = {1/2} - \th(x)$, or our regularized version thereof, this means that there is a central region $(- \vF t, \vF t)$ around $x=0$ with zero total density (relative to the large constant ground state density) bounded by two fronts moving with constant velocity.
The shape of the fronts does not change with time, and, as $t\to\io$, the system reaches a state with vanishing total density everywhere.
Similarly, the current is non-zero in the same region, and, as $t\to\io$, it tends to the non-vanishing value $\m/2\pi$ everywhere.

We also show that the two-point correlation function without interaction is given by
\be
\langle\Psi_{0,\m}^{0}(t)| \psi^+_{r}(x)\psi^-_{r}(y)
	|\Psi_{0,\m}^{0}(t)\rangle
= \frac{i}{2\pi r(x-y)+i 0^+}
	\exp \left( -ir\vF^{-1} \m \int_{y-r\vF t}^{x-r\vF t} dz \Wint(z) \right).
\ee
For finite $t$, the two-point correlation function is not translation invariant.
However, asymptotically in time,
\be
\lim_{t \to \io} \int_{y-r\vF t}^{x-r\vF t} dz \Wint(z) = r \frac{x-y}{2},
\ee
meaning that translation invariance is recovered:
\be
\lim_{t\to\io}
\langle\Psi_{0,\m}^{0}(t)| \psi^+_{r}(x) \psi^-_{r}(y) |\Psi_{0,\m}^{0}(t)\rangle
= \frac{ie^{-ir\vF^{-1} (r\m/2) (x-y)}}{2\pi r(x-y)+i 0^+}.
\label{acc0}
\ee
Since \eqref{acc0} is similar to \eqref{ss} with $\l=0$, this suggests that the final steady state is similar to the ground state of free fermions with different chemical potentials $\m_\pm - \m_0 = \pm \m/2$ for right- and left-moving fermions, obtained from the two-point correlation function (cf.\ \eqref{ss_tilde}).
We can make this precise by comparing \eqref{acc0} with the two-point correlation function obtained from a Hamiltonian describing such a system; this we will do below for the case with interaction.
Therefore, in absence of interaction, a very clear picture emerges: the ground state of free fermions, with an external field producing a domain wall, evolves as $t\to\io$ to a steady state which has the same two-point correlation function as that of a ground state of free fermions with different chemical potentials for right- and left-moving fermions.

Let us now consider the evolution of the domain wall initial state in the presence of interaction, i.e., with non-zero $\l$ and $\l'$.
The question we try to answer is the following: how does the interaction modify the evolution of the domain wall state?
Below we present the exact results for general $\l$ and $\l'$.
The proofs are given in Sec.~\ref{Sec:Evolution_following_a_quench}.

We show that the total density and the current are given by the following exact expressions in the thermodynamic limit:
\begin{align}
R(x,t) = \langle\Psi_{\l,\m}^{\l'}(t)| \r(x) |\Psi_{\l,\m}^{\l'}(t)\rangle
	& = \frac{\m}{2\pi} \int_{-\io}^{\io} \frac{dp}{2\pi}
		\frac{K_{\l}(p)}{v_{\l}(p)} \hat{\Wint}(p) 
		2\cos(pv_{\l'}(p)t) e^{ipx},
\label{Rxt} \\
I(x,t) = \langle\Psi_{\l,\m}^{\l'}(t)| j(x) |\Psi_{\l,\m}^{\l'}(t)\rangle
	& = \frac{\m}{2\pi} \int_{-\io}^{\io} \frac{dp}{2\pi}
		\frac{K_{\l}(p)}{v_{\l}(p)} \hat{\Wint}(p) v_{\l'}(p)
		(-2i \sin(pv_{\l'}(p)t)) e^{ipx}
\label{Ixt}	
\end{align}
with the \emph{renormalized Fermi velocity}
\be
v_{\l}(p) = \vF \sqrt{ 1 + 2\l\hat{\Vint}(p)/\pi\vF}
\label{vL}
\ee
and the \emph{Luttinger parameter}
\be
K_{\l}(p) = \frac{1}{\sqrt{1 + 2\l\hat{\Vint}(p)/\pi\vF}}.
\label{Luttinger_parameter}
\ee
For our particular interaction $K_{\l}(p)$ and $v_{\l}(p)$ satisfy $K_{\l}(p) v_{\l}(p) = \vF$, but we note that this is not true in general; see, e.g., \cite{V} or Appendix~\ref{App:General_interactions}.
For a special case, similar to a case considered in \cite{IC}, we plot the total density in Fig.~\ref{Fig:Total_density} and the current in Fig.~\ref{Fig:Current} for the subsystem on the interval $[-\ell, \ell]$ with $L > \ell > 0$.
In this case, the initial state is the non-interacting ground state $|\Psi_{0,\m}\rangle$, which is evolved under $H_{\l'}$ with the non-local interaction potential $\hat{\Vint}(p) = (\pi\vF/2) \sech(ap)$ with interaction range $a > 0$.

We show that the two-point correlation function is given by the following exact expression in the thermodynamic limit:
\be
\langle\Psi_{\l,\m}^{\l'}(t)| \psi^+_{r}(x)\psi^-_{r}(y) |\Psi_{\l,\m}^{\l'}(t)\rangle
= e^{-ir\vF^{-1} A_{r}(x,y,t)(x-y)}
	S_{r}(x,y,t)
\label{psi_psi_factoring_infinite_L}
\ee
with
\be
A_{r}(x,y,t)
= \m \int_{-\io}^{\io} \frac{dp}{2\pi}
		\frac{K_{\l}(p)}{v_{\l}(p)} \hat{\Wint}(p)
		\left(
			\vF \cos(pv_{\l'}(p)t) - ir v_{\l'}(p) \sin(pv_{\l'}(p)t)
		\right) \frac{e^{ipx} - e^{ipy}}{ip(x-y)}
\label{Arxyt}
\ee
and
\be
S_{r}(x,y,t)
= \langle\Psi_{\l,0}^{\l'}(t)| \psi^+_{r}(x)\psi^-_{r}(y) |\Psi_{\l,0}^{\l'}(t)\rangle.
\label{Srxyt}
\ee
The latter is the two-point correlation function in the absence of external field, i.e., $\m = 0$, in which case the initial state is the ground state $|\Psi_{\l}\rangle$ of $H_{\l}$ but the Hamiltonian driving the evolution is $H_{\l'}$ as before.
We show that
\be
S_{r}(x,y,t)
= \frac{i}{2\pi r(x-y) + i0^+}
\exp \left( \int_0^\io dp
	\frac{\h_{\l,\l'}(p) - \g_{\l,\l'}(p) \cos (2p v_{\l'}(p)t)}{p}
	\left( \cos p(x-y) - 1 \right)
\right)
\label{l_lp_two_point_func}
\ee
with exponents
\be
\begin{aligned}
\h_{\l,\l'}(p)
	& = \frac{ K_{\l}(p) (K_{\l'}(p)^{-2}+1) + K_{\l}(p)^{-1}(K_{\l'}(p)^{2}+1) }{4} - 1, \\
\g_{\l,\l'}(p)
	& = \frac{ K_{\l}(p) (K_{\l'}(p)^{-2}-1) + K_{\l}(p)^{-1}(K_{\l'}(p)^{2}-1) }{4}.
\end{aligned}
\label{eta_gamma_K_Kp}
\ee
In general, these differ from the equilibrium exponents in \eqref{ss}.
The latter are obtained only when $\l = \l'$, in which case $\h_{\l,\l}(p) = \h_{\l}(p) = (K_{\l}(p) + K_{\l}(p)^{-1})/2 - 1$ and $\g_{\l,\l}(p) = 0$, where $\h_{\l}(p)$ can be expressed in terms of $\l \hat{\Vint}(p)$ using \eqref{Luttinger_parameter}.
Note also the identity
\be
\h_{\l,\l'}(p) = \h_{\l}(p) + \g_{\l,\l'}(p).
\label{eta_l_lp_equal_eta_l_gamma_l_lp}
\ee
This shows that the correlation function in \eqref{ss} is, indeed, recovered at time $t = 0$.

\begin{figure}[!htbp]
	\centering
	
	\subfigure[\;$t=0.0\ell/\vF$ and $\l' = -0.96$]{
		\includegraphics[scale=1]{./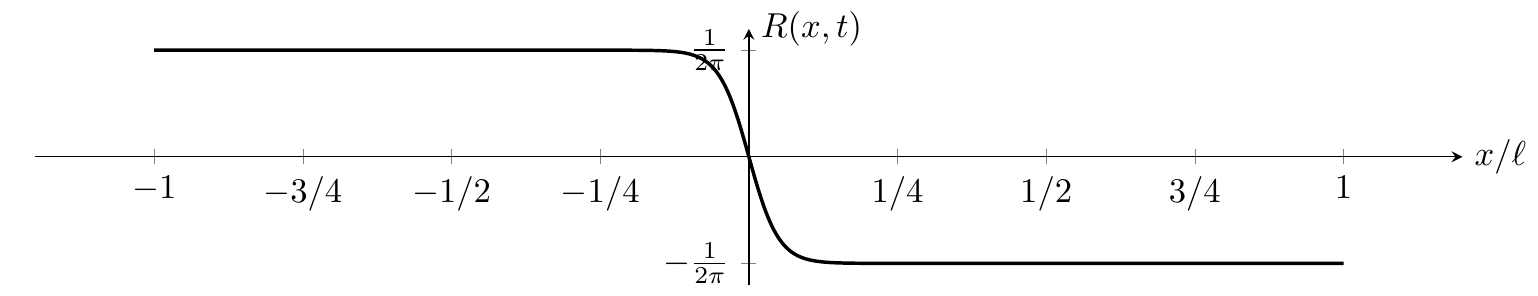}
		\label{Fig:Density_t_0}
	}
	
	\subfigure[\;$t=2.0\ell/\vF$ and $\l' = -0.96$]{
		\includegraphics[scale=1]{./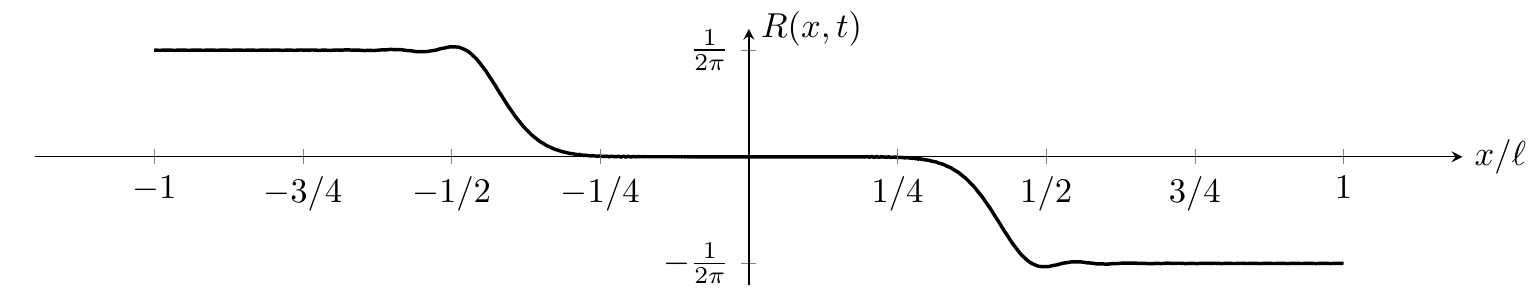}
		\label{Fig:Density_t_400}
	}
	
	\subfigure[\;$t=4.0\ell/\vF$ and $\l' = -0.96$]{
		\includegraphics[scale=1]{./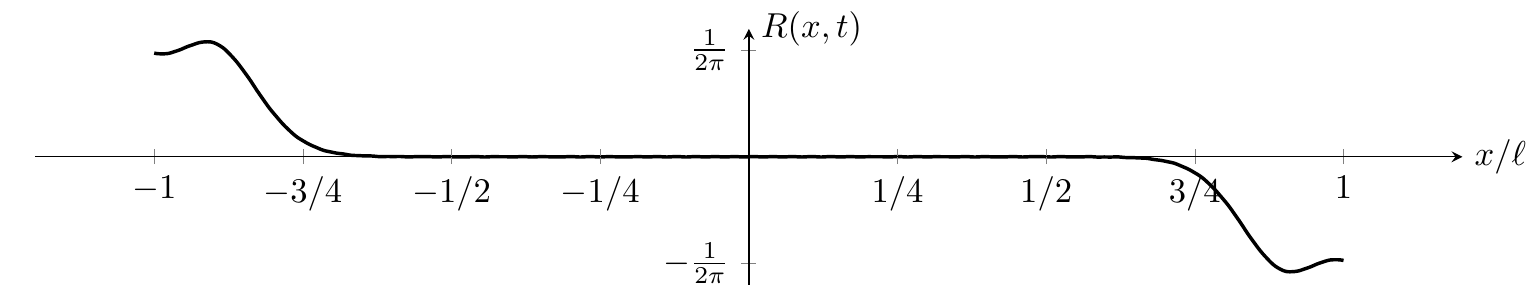}
		\label{Fig:Density_t_800}
	}
	
	\subfigure[\;$t=6.0\ell/\vF$ and $\l' = -0.96$]{
		\includegraphics[scale=1]{./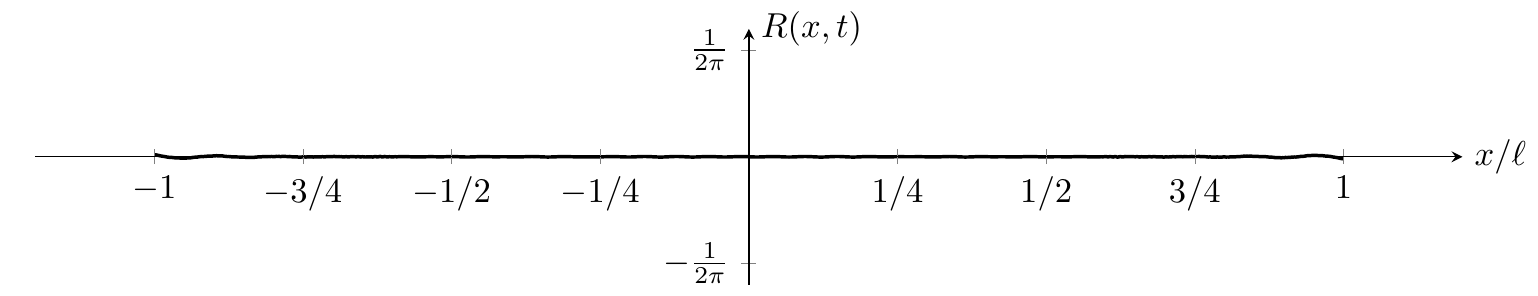}
		\label{Fig:Density_t_1200}
	}
	
	\subfigure[\;$t = 2.0\ell/\vF$ and $\l' = -0.80$ (solid line), $\l' = -0.90$, $\l' = -0.95$, and $\l' = -0.99$ (dotted line)]{
		\includegraphics[scale=1]{./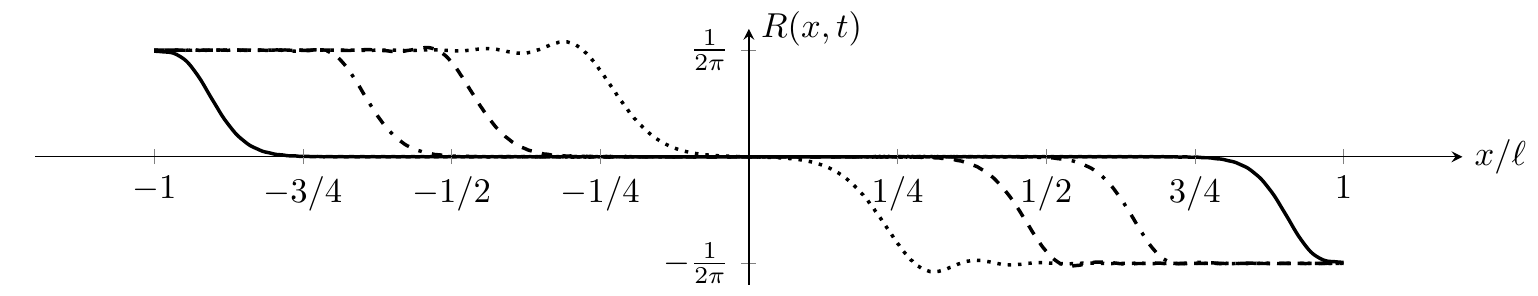}
		\label{Fig:Density_lambda_all}
	}
	
	\caption{Total density at time $t$ and for coupling constant $\l'$, starting from the non-interacting ground state, $\l = 0$, with domain wall profile, $\m = 1$, and evolving with a non-local interaction potential $\hat{\Vint}(p) = (\pi\vF/2) \sech(ap)$ with interaction range $a = 0.0025\ell$ and $\vF = 1$.}
	\label{Fig:Total_density}
\end{figure}
\begin{figure}[!htbp]
	\centering
	
	\subfigure[\;$t=0.0\ell/\vF$ and $\l' = -0.96$]{
		\includegraphics[scale=1]{./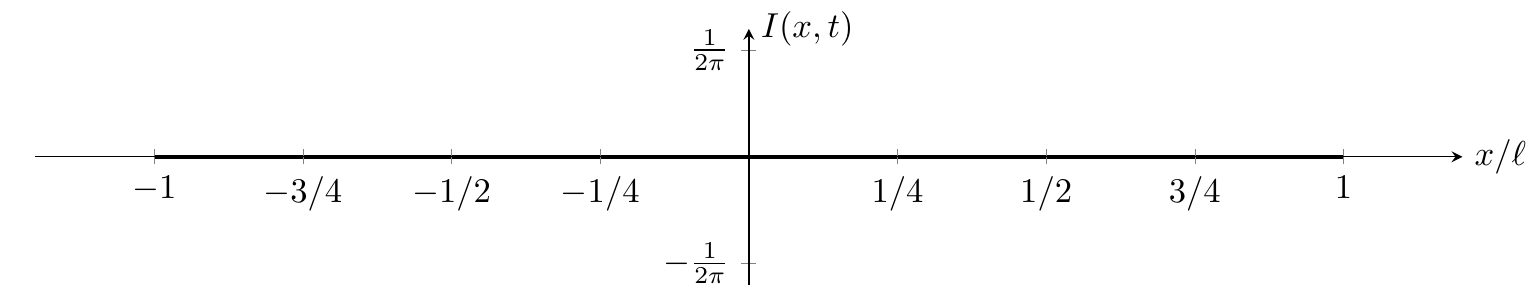}
		\label{Fig:Current_t_0}
	}
	
	\subfigure[\;$t=2.0\ell/\vF$ and $\l' = -0.96$]{
		\includegraphics[scale=1]{./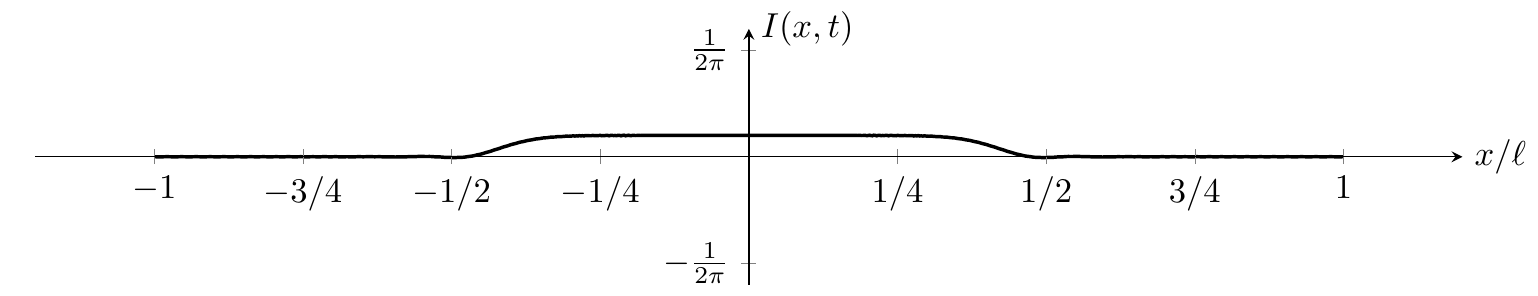}
		\label{Fig:Current_t_400}
	}
	
	\subfigure[\;$t=4.0\ell/\vF$ and $\l' = -0.96$]{
		\includegraphics[scale=1]{./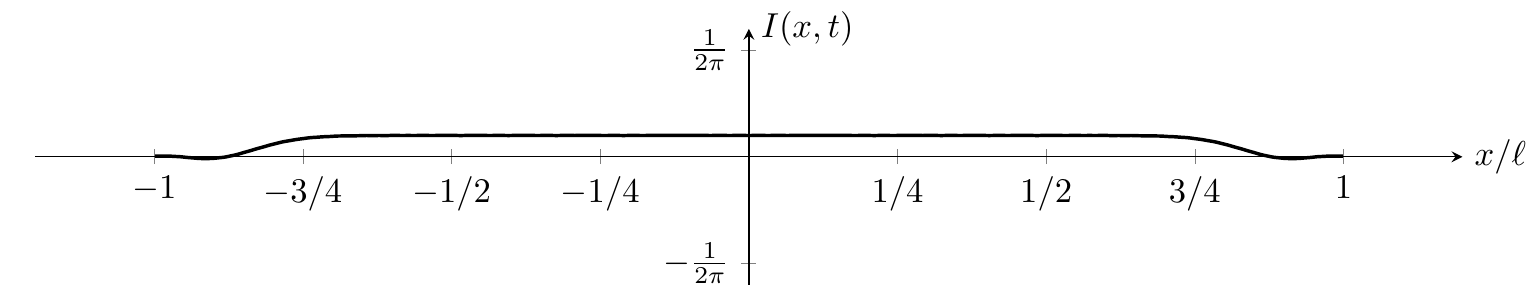}
		\label{Fig:Current_t_800}
	}
	
	\subfigure[\;$t=6.0\ell/\vF$ and $\l' = -0.96$]{
		\includegraphics[scale=1]{./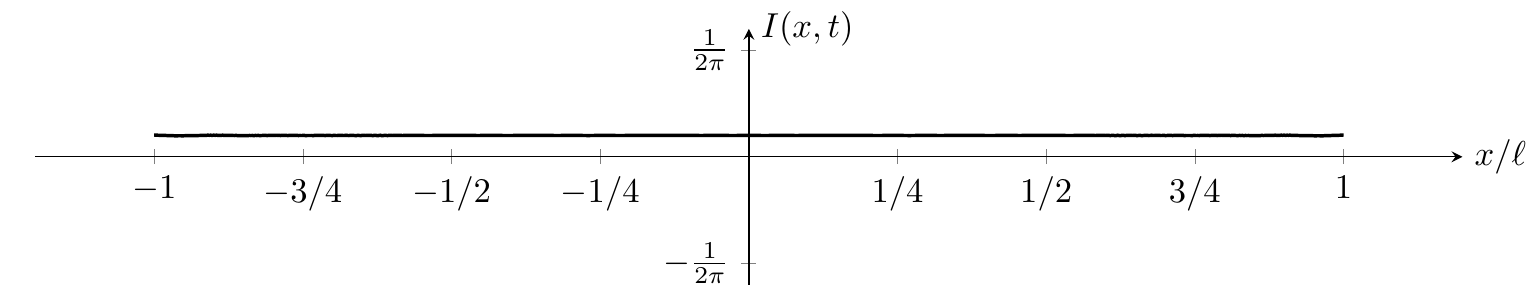}
		\label{Fig:Current_t_1200}
	}
	
	\subfigure[\;$t = 2.0\ell/\vF$ and $\l' = -0.80$ (solid line), $\l' = -0.90$, $\l' = -0.95$, and $\l' = -0.99$ (dotted line)]{
		\includegraphics[scale=1]{./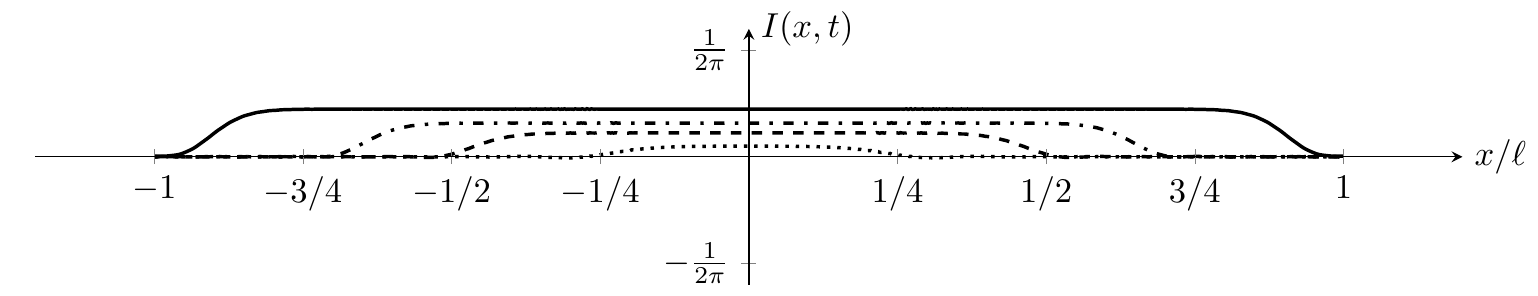}
		\label{Fig:Current_lambda_all}
	}
	
	\caption{Current at time $t$ and for coupling constant $\l'$, starting from the non-interacting ground state, $\l = 0$, with domain wall profile, $\m = 1$, and evolving with a non-local interaction potential $\hat{\Vint}(p) = (\pi\vF/2) \sech(ap)$ with interaction range $a = 0.0025\ell$ and $\vF = 1$.}
	\label{Fig:Current}
\end{figure}

Using the results above we study the asymptotic behavior of the total density, the current, and the two-point correlation function.
One important quantity describing the evolution is the group velocity
\be
v^{\textnormal{g}}_{\l}(p) = d(pv_{\l}(p))/dp,
\label{vG}
\ee
which we require to be positive (cf.\ Lemma~\ref{Lemma:Asymptotics} in Appendix~\ref{App:Asymptotics} where the positivity of $v^{\textnormal{g}}_{\l}(p)$ is needed).
This can be shown to correspond to the third condition in \eqref{V_conditions_2}.
In Sec.~\ref{Sec:Approach_to_steady_state} we prove the following result (see \eqref{R_I_A_asymptotics} for a stronger version of \eqref{Rxt_Ixt_Arxyt_Asymptotics}):

\begin{theorem}
\label{Thm:Asymptotics}
If $\Vint(x)$ satisfies the conditions in \eqref{V_conditions_1} and \eqref{V_conditions_2}, then
\be
\lim_{t \to \io} R(x,t) = 0,
\qquad
\lim_{t \to \io} I(x,t) = \frac{\m}{2\pi} \frac{K_{\l} v_{\l'}}{v_{\l}},
\qquad
\lim_{t \to \io} A_{\pm}(x,y,t) = \pm \frac{\m}{2} \frac{K_{\l} v_{\l'}}{v_{\l}}
\quad
\forall x,y
\label{Rxt_Ixt_Arxyt_Asymptotics}
\ee
and
\begin{multline}
\lim_{t\to\io}
\langle\Psi_{\l,\m}^{\l'}(t)| \psi^+_{r}(x) \psi^-_{r}(y) |\Psi_{\l,\m}^{\l'}(t)\rangle \\
= \frac{ie^{- ir\vF^{-1} (r\m K_{\l} v_{\l'}/2v_{\l}) (x-y)}}
		{2\pi r(x-y) + i0^+}
	\exp \left( \int_0^\io dp \frac{\h_{\l,\l'}(p)}{p}
		\left( \cos p(x-y) - 1 \right) \right),
\label{acc}
\end{multline}
where $K_{\l} = K_{\l}(0)$ and $v_{\l} = v_{\l}(0)$.
\end{theorem}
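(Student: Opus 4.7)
The plan is to prove Theorem~\ref{Thm:Asymptotics} by combining the explicit integral representations \eqref{Rxt}--\eqref{Arxyt} with the factorization \eqref{psi_psi_factoring_infinite_L}, then separately analyzing $S_r(x,y,t)$ via its explicit formula \eqref{l_lp_two_point_func}. The heart of the argument is an asymptotic analysis of oscillatory integrals with the nontrivial $p$-dependent phase $p v_{\lambda'}(p) t$, carried out via Lemma~\ref{Lemma:Asymptotics} in Appendix~\ref{App:Asymptotics}.

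For the limits of $R$, $I$, and $A_r$, I would insert the regularized field $\hat{\Wint}(p) = i/p$ into \eqref{Rxt}--\eqref{Arxyt} and exploit the even parity in $p$ of $\phi(p) = K_\lambda(p)/v_\lambda(p)$ and $v_{\lambda'}(p)$ to rewrite each integral as a sum of one-sided Dirichlet-type integrals of the form $\int_0^\infty h(p) \sin(p(c \pm v_{\lambda'}(p) t))/p \, dp$, with $c$ determined by $x$ and/or $y$. Lemma~\ref{Lemma:Asymptotics} then supplies the $t \to \infty$ limits of such integrals: the positivity of $v^{\textnormal{g}}_{\lambda'}(p)$, guaranteed by the third condition in \eqref{V_conditions_2}, makes $u = p v_{\lambda'}(p)$ a valid change of variable, reducing each integral to a standard Dirichlet form whose limit equals $(\pi/2) h(0) \, \mathrm{sign}(c \pm v_{\lambda'} t)$. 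Summing the contributions yields $\lim_{t \to \infty} R(x,t) = 0$ (two Dirichlet terms cancel due to opposite signs for large $t$) and $\lim_{t \to \infty} I(x,t) = (\mu/2\pi)(K_\lambda v_{\lambda'}/v_\lambda)$ (the two terms now add, and the extra factor $v_{\lambda'}(0) = v_{\lambda'}$ in \eqref{Ixt} appears in the limit). For $A_r$, after separating $v_F \cos(p v_{\lambda'}(p) t) - i r v_{\lambda'}(p) \sin(p v_{\lambda'}(p) t)$ into its $\cos$ and $\sin$ pieces, the $\cos$ piece combined with $(e^{ipx} - e^{ipy})/(ip(x-y))$ yields an integrand that is regular at $p = 0$, so its contribution vanishes by Riemann--Lebesgue, while the $\sin$ piece gives a Dirichlet-type contribution, producing $\lim_{t \to \infty} A_\pm(x,y,t) = \pm (\mu/2)(K_\lambda v_{\lambda'}/v_\lambda)$, where the sign $r = \pm$ originates from the coefficient $-i r v_{\lambda'}$.

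For the limit of $S_r$, the exponent in \eqref{l_lp_two_point_func} splits into a time-independent piece with integrand $\eta_{\lambda,\lambda'}(p) (\cos p(x-y) - 1)/p$ and a time-dependent piece with integrand $-\gamma_{\lambda,\lambda'}(p) \cos(2 p v_{\lambda'}(p) t)(\cos p(x-y) - 1)/p$. Since $(\cos p(x-y) - 1)/p = O(p)$ near $p = 0$ and $\gamma_{\lambda,\lambda'}(p)$ decays at infinity by the conditions \eqref{V_conditions_1}--\eqref{V_conditions_2}, the integrand in the oscillatory part lies in $L^1(\mathbb{R}_+)$; by Riemann--Lebesgue (or another application of Lemma~\ref{Lemma:Asymptotics}) this piece tends to $0$ as $t \to \infty$. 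Combining the surviving $\eta_{\lambda,\lambda'}$ integral with the limit of $A_r$ via the factorization \eqref{psi_psi_factoring_infinite_L} yields \eqref{acc}.

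The main obstacle is the rigorous control of the generalized Dirichlet integrals in the first step, where the phase $p v_{\lambda'}(p) t$ depends on $p$ through $v_{\lambda'}(p)$ and the integrands carry a $1/p$ (or, in the case of $A_r$, effectively $1/p^2$) singularity at the origin. This is precisely the content of Lemma~\ref{Lemma:Asymptotics}: the monotonicity from $v^{\textnormal{g}}_{\lambda'}(p) > 0$ reduces the problem to the classical Dirichlet integral via change of variable, while the regularity and decay of $\hat{V}(p)$ imposed in \eqref{V_conditions_1}--\eqref{V_conditions_2} provide the uniform error estimates needed for the limit to persist.
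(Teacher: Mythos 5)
Your proposal is correct and follows essentially the same route as the paper: every limit is obtained from Lemma~\ref{Lemma:Asymptotics}, whose change of variables $u(0)\tilde{p} = v_{\l'}(p)p$ (valid precisely because $v^{\textnormal{g}}_{\l'}(p) > 0$) together with an integration by parts reduces each oscillatory integral to its dispersionless counterpart, after which the principal-value Dirichlet identities in \eqref{Asymptotics_lemma_helper} yield the stated constants, and the $\g_{\l,\l'}$ contribution to $S_{r}$ disappears because the corresponding $f(0)$ vanishes. The only cosmetic difference is that you phrase the surviving terms as one-sided Dirichlet integrals with the spatial offset inside the phase (and invoke Riemann--Lebesgue for the regular pieces), whereas the paper evaluates $F_0(t)$ directly; these are equivalent.
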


One consequence of \eqref{acc} is that, while the two-point correlation function in \eqref{l_lp_two_point_func} is not translation invariant for finite $t$, translation invariance is recovered asymptotically in time.
This is a generalization of the corresponding result derived in \cite{MW}.
As in the non-interacting case, by comparing with \eqref{ss}, it follows that \eqref{acc} describes fermions with different chemical potentials for right- and left-moving particles,
\be
\m_\pm - \m_0 = \pm \frac{\m}{2} \frac{K_{\l} v_{\l'}}{v_{\l}},
\label{mr_m0_equal_Ar}
\ee
obtained from the phase factor in the correlation function (cf.\ \eqref{ss_tilde}).
However, since $\h_{\l,\l'}(p) \neq \h_{\l}(p)$ when $\l \neq \l' \neq 0$, the correlation function in \eqref{acc} is not, in general, a ground state two-point correlation function as in \eqref{ss}.
The only case where this could be true is if $\l = \l'$, and in Sec.~\ref{Sec:Luttinger_model_with_constant_chemical_potentials} we prove the following result allowing us to compare \eqref{acc} with the corresponding (equal-time) two-point correlation function:

\begin{theorem}
\label{Thm:Hamiltonian_2}
For finite $L$, the Hamiltonian
\be
H = H_{\l} - \sum_{r=\pm} (\m_{r} - \m_0) Q_{r},
\qquad
Q_{r} = \int_{-L/2}^{L/2} dx\, \r_r(x),
\label{hamiltonian2}
\ee
with $\Vint(x)$ satisfying the conditions in \eqref{V_conditions_1} and $\m_\pm$ being constant chemical potentials satisfying $\m_+ + \m_- = 2\m_0$, defines a self-adjoint operator on the fermion Fock space $\mathcal{F}$.
This operator is bounded from below, has pure point spectrum, and a non-degenerate ground state.
Denoting the ground state of $H$ by $|\Psi\rangle$, the two-point correlation function in the thermodynamic limit is
\be
\langle\Psi| \psi^+_{r}(x) \psi^-_{r}(y) |\Psi\rangle
= \frac{ie^{-ir\vF^{-1} (\m_{r}-\m_0) (x-y)}}{2\pi r(x-y)+i0^+}
	\exp \left( \int_0^\io dp \frac{\h_{\l}(p)}{p} (\cos p(x-y) - 1)\right)
\label{psi_psi_correl_2}
\ee
with $\h_{\l}(p) = \h_{\l,\l}(p)$ in \eqref{eta_gamma_K_Kp}.
\end{theorem}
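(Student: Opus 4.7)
The structural observation I would exploit is that the chiral number operators $Q_{\pm}$ commute with $H_{\lambda}$: the kinetic term is diagonal in $r$, and the density--density interaction is built from the $Q_{\pm}$-neutral densities $\rho_{r}(x)$. Thus $H = H_{\lambda} - \sum_{r}(\mu_{r}-\mu_{0})Q_{r}$ is obtained from $H_{\lambda}$ by adding a linear function of commuting self-adjoint zero modes, and its spectral analysis reduces to the spectral analysis of $H_{\lambda}$ on each joint eigenspace of $(Q_{+},Q_{-})$, with eigenvalues shifted by the c-number $-\sum_{r}(\mu_{r}-\mu_{0})q_{r}$ on the $(q_{+},q_{-})$ sector.

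Building on this, self-adjointness, semiboundedness, pure point spectrum, and non-degeneracy of the ground state will follow by combining the sector decomposition with the properties of $H_{\lambda}$ reviewed in Section~\ref{Sec:Exact_solution_of_the_Luttinger_model}. The zero-mode piece of $H_{\lambda}$ is a quadratic form in $(q_{+},q_{-})$ of order $1/L$, positive definite by condition~2 of \eqref{V_conditions_1}; the remaining non-zero bosonic modes have the same (diagonalizable) quadratic Hamiltonian on every sector. Adding the linear perturbation $-(\mu_{+}-\mu_{0})(q_{+}-q_{-})$, where the constraint $\mu_{+}+\mu_{-}=2\mu_{0}$ is what makes this reduction consistent, and completing the square identifies a unique minimizing sector $(q_{+}^{\star},q_{-}^{\star})$ with $q_{\pm}^{\star} \approx \pm L(\mu_{+}-\mu_{0})/(2\pi\vF)$, in which the ground state of $H$ coincides with the ground state of $H_{\lambda}$ restricted to that sector; non-degeneracy and pure point spectrum then descend from the corresponding statements for $H_{\lambda}$ on each sector.

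To obtain \eqref{psi_psi_correl_2} I would use the bosonization representation of $\psi_{r}^{\pm}(x)$ from Section~\ref{Sec:Exact_solution_of_the_Luttinger_model} as Klein (charge-shift) factors times vertex operators in a bosonic field $\phi_{r}(x)$, so that $\langle\Psi|\psi_{r}^{+}(x)\psi_{r}^{-}(y)|\Psi\rangle$ factorizes into a non-zero-mode oscillator contribution and a zero-mode contribution. The oscillator contribution reproduces exactly the equilibrium computation of \cite{ML}, yielding the factor $i/(2\pi r(x-y)+i0^{+})$ times $\exp(\int_{0}^{\infty}dp\,\eta_{\lambda}(p)p^{-1}(\cos p(x-y)-1))$ as in \eqref{ss}, because the Bogoliubov transformation that diagonalizes the oscillator modes is the same one used for $H_{\lambda}$. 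The zero-mode contribution, evaluated in the shifted sector, produces the extra phase $e^{-ir\vF^{-1}(\mu_{r}-\mu_{0})(x-y)}$, via a Baker--Campbell--Hausdorff computation on the zero-mode part of $\phi_{r}(x)$ together with the identity $q_{r}^{\star}/L \to (\mu_{r}-\mu_{0})/(2\pi\vF)$ in the thermodynamic limit.

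The main obstacle I anticipate is careful bookkeeping of the $1/L$ prefactors and the discreteness of $(q_{+},q_{-})\in\mathbb{Z}^{2}$: the minimizing charges grow linearly with $L$, so the saddle-point and thermodynamic limits must be taken simultaneously, and one must check that rounding $q_{\pm}^{\star}$ to the nearest integers produces subleading corrections that do not spoil the clean phase factor in \eqref{psi_psi_correl_2}. Once this is verified uniformly in $(x,y)$, the theorem follows; the identity $\eta_{\lambda}(p)=\eta_{\lambda,\lambda}(p)$ required to match \eqref{eta_gamma_K_Kp} is immediate from $\gamma_{\lambda,\lambda}(p)=0$.
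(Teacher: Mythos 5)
Your proposal is correct and follows essentially the same route as the paper: what you describe as decomposing into charge sectors and completing the square in $(q_+,q_-)$ is implemented in the paper as a large gauge transformation by the Klein factors, $R_+^{-w_+}R_-^{w_-}\,e^{iS_{\l}}He^{-iS_{\l}}\,R_+^{w_+}R_-^{-w_-}$ with $w_\pm=L(\m_\pm-\m_0)/2\pi\vF$ chosen precisely to cancel the terms linear in $Q_\pm$, so that the ground state is $R_+^{w_+}R_-^{-w_-}|\Psi_{\l}\rangle$ (your minimizing sector), and the phase in \eqref{psi_psi_correl_2} comes from the conjugation identity $R_+^{-w_+}R_-^{w_-}\psi^{\pm}_{r}(x)R_+^{w_+}R_-^{-w_-}=e^{\mp ir\vF^{-1}(\m_r-\m_0)x}\psi^{\pm}_{r}(x)$, exactly your zero-mode vertex-operator contribution. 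The integrality issue you flag is handled in the paper simply by restricting $\m_\pm$ so that $w_\pm\in\mathbb{Z}$ and noting this is immaterial in the thermodynamic limit.
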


Thus, \eqref{acc} for $\l = \l'$ and \eqref{psi_psi_correl_2} are the same, and this justifies the identification in \eqref{mr_m0_equal_Ar}.
Moreover, while we do not give the details, this result can also be extended to non-equal-time two-point correlation functions, as explained at the end of Sec.~\ref{Sec:Approach_to_steady_state}.

In conclusion, as $t\to\io$, the system reaches a translation invariant steady state with a vanishing density and a non-vanishing current.
The region of vanishing density increases with time and lies between two fronts which evolve ballistically.
However, for a non-local interaction, as in Figs.~\ref{Fig:Total_density}~and~\ref{Fig:Current}, the system is \emph{dispersive}.
To see this, note that, contrary to the non-interacting case, the renormalized Fermi velocity in \eqref{vL} and therefore also the group velocity in \eqref{vG}, which determines the propagation of the fronts, depend on the momentum.
This means that the shape of the fronts changes over time.
These dispersion effects appear, e.g., as oscillations traveling ahead of the fronts in Figs.~\ref{Fig:Total_density}~and~\ref{Fig:Current}.
Similar behavior was found in numerical simulations for quantum XXZ spin chains in \cite{SM}.

Only in the case $\l=\l'$, i.e., if the quench consists simply of switching off the external field, does one find that the steady state has the same two-point correlation function as the ground state of the Hamiltonian in \eqref{hamiltonian2} with different chemical potentials $\m_\pm$ for right- and left-moving fermions.
When $\l \neq \l' \neq 0$ the final steady state is inherently different from a ground state of such a Hamiltonian since the non-equilibrium exponents in \eqref{acc} are different from the equilibrium ones in \eqref{ss}.
This implies that memory of the initial state is conserved for infinite times; cf.\ also \cite{IC}.

It follows from Theorems~\ref{Thm:Asymptotics}~and~\ref{Thm:Hamiltonian_2} that the final steady state carries a current $I = \m K_{\l} v_{\l'}/2\pi v_{\l}$ and has an effective chemical potential difference $\m_+ - \m_- = \m K_{\l} v_{\l'}/v_{\l}$ depending on the coupling constants $\l$ and $\l'$.
However, even though the final state depends on the details of the time evolution and the initial state, the Landauer conductance \cite{La2, ACF2} is universal:
\be
G = \frac{I}{\m_+ - \m_-}
	= \frac{\m K_{\l} v_{\l'}}{2\pi v_{\l}} \frac{v_{\l}}{\m K_{\l} v_{\l'}}
	= \frac{1}{2\pi},
\label{Landauer_conductance_cancelation}
\ee
which is equal to the conductance quantum $e^2/h = 1/2\pi$ (in units where $e = \hbar = 1$) for the spinless case.
Since, in general, the final steady state is not the ground state of a Luttinger Hamiltonian, this universality is a true non-equilibrium phenomenon.

Finally, we note that the formulas given above for the evolution following a quench can be shown to remain valid for more general interactions in the Luttinger model; cf.\ the final remark in Appendix~\ref{App:General_interactions}.


\section{Exact solution of the Luttinger model}
\label{Sec:Exact_solution_of_the_Luttinger_model}
We first solve the model in the absence of external field.
What follows essentially reviews the solution of the Luttinger model in \cite{ML} but for a different interaction (cf.\ the appendix in \cite{ML}) than the usual Luttinger interaction \cite{Lu} (cf.\ also Appendix~\ref{App:General_interactions}).
Note that here and in the reminder of the paper we use the convention that sums over variables range over all allowed values, unless specified otherwise.

The ground state of the non-interacting Hamiltonian $H_{0}$ is the filled Dirac sea $|\Psi_0\rangle$ defined by the condition
\be
a^-_{r,rk} |\Psi_0\rangle = a^+_{r,-rk} |\Psi_0\rangle = 0 \quad \forall k > 0
\label{a1}
\ee
using fermion creation and annihilation operators $a^\pm_{r,k}$, $r = \pm$ and $k = \pi (2n+1)/L$ with $n \in \mathbb{Z}$, satisfying
\be
\left\{ a^-_{r,k\ppr}, a^+_{r',k'} \right\} = \d_{r,r'} \d_{k,k'},
\qquad
\left\{ a^\pm_{r,k\ppr}, a^\pm_{r',k'} \right\} = 0.
\label{a2}
\ee
The conditions in \eqref{a1} and \eqref{a2} fully determine the fermion Fock space $\mathcal{F}$, and the $a^\pm_{r,k}$ can be used to construct fields $\psi^\pm_{r}(x) = L^{-1/2} \sum_{k} a^\pm_{r,k} e^{\mp ikx}$ describing right- ($r = +$) and left- ($r = -$) moving fermions.
We define the densities as $\r_{r}(p) = \sum_{k} \wick{ a^+_{r,k}a^-_{r,k+p} }$ for $p={2\pi n}/{L}$ with $n \in \mathbb{Z}$, where Wick (normal) ordering $\wick{\cdots}$ can be defined as
\be
\wick{A} = A - \langle\Psi_0| A |\Psi_0\rangle
\label{wick_ordering}
\ee
for operators $A = a^+_{r,k\ppr}a^-_{r',k'}$.
The densities satisfy
\be
\begin{gathered}
\left[ \r_{r\ppr}(p),\r_{r'}(-p') \right]
	= r\delta_{r,r'} \frac{Lp}{2\pi} \delta_{p,p'}, \\
\r_+(p)|\Psi_0\rangle = \r_-(-p)|\Psi_0\rangle = 0 \quad \forall p \geq 0,
\label{fg}
\end{gathered}
\ee
and expressing the non-interacting Hamiltonian in Fourier space in its bosonized form,
\be
H_{0} = \frac{\pi}{L} \vF  \left( \r_+(0)^2 + \r_-(0)^2 \right)
	+ \sum_{p>0} \frac{2\pi}{L} \vF \left( \r_+(-p)\r_{+}(p) + \r_-(p)\r_-(-p) \right),
\ee
one can see that
\be
[H_{0}, \r_{r}(p)] = - r\vF p \r_{r}(p).
\label{H0_rsp}
\ee
These results are well-known in both the condensed matter and mathematical physics literature; see, e.g., \cite{LaMo} for self-contained proofs.
(We note that our conventions differ from \cite{ML} in that $\r_1(-p)$ and $\r_2(-p)$ in \cite{ML} correspond to our densities $\r_+(p)$ and $\r_-(p)$, respectively, and our conventions for the Fourier transform are such that $\r_{r}(x) = \sum_{p} (1/L) \r_{r}(p) e^{ipx}$ for the densities, while in \cite{ML} the corresponding transform has a minus sign in the exponent.)

We now consider the interacting Hamiltonian
\be
\begin{aligned}
H_{\l} & = H_{0}  + \l\Hint, \\
\Hint & = \frac{1}{L} \hat{\Vint}(0) \r(0)^2
	+ \sum_{p \neq 0} \frac{1}{L} \hat{\Vint}(p) \r(p) \r(-p)
\end{aligned}
\ee
with $\r(p) = \r_+(p) + \r_-(p)$ and $\hat{\Vint}(p)$ satisfying the conditions in \eqref{V_conditions_1}.
As explained in \cite{ML}, $H_{\l}$ can be diagonalized by a Bogoliubov transformation implemented by a unitary operator $e^{iS_{\l}}$ (defined in \eqref{S_operator}):
\be
e^{i S_{\l}} H_{\l} e^{-i S_{\l}} = H_{0} - T_{\l} + D_{\l} + W_{\l}
\label{Diagonalized_Hamiltonain}
\ee
with
\be
\begin{aligned}
T_{\l} & = - \frac{1}{L} \l \hat{\Vint}(0) \r(0)^2
	+ \sum_{p>0} \frac{2\pi}{L} \vF \left( \r_+(-p)\r_{+}(p) + \r_-(p)\r_-(-p) \right), \\
D_{\l} & = \sum_{p>0} \frac{2\pi}{L} v_{\l}(p)
		 \left( \r_+(-p)\r_{+}(p) + \r_-(p)\r_-(-p) \right), \\
W_{\l} & = - \sum_{p>0} \left( \vF - v_{\l}(p) \right) p, \vphantom{\frac{2\pi}{L}}
\end{aligned}
\label{Diagonalized_Hamiltonain_ops}
\ee
where $v_{\l}(p)$ is the renormalized Fermi velocity in \eqref{vL}.
Since $H_{0} - T_{\l}$ contains only zero modes and $W_{\l}$ is a c-number, both commute with the densities $\r_{r}(p)$.
The unitary operator $e^{iS_{\l}}$ is given by
\be
S_{\l} = i \sum_{p \neq 0} \frac{2\pi}{L} \frac{\ph_{\l}(p)}{p} \r_{+}(-p)\r_-(p),
\qquad
\tanh 2\ph_{\l}(p) = -\frac{\l\hat{\Vint}(p)}{\pi\vF+\l\hat{\Vint}(p)}
\quad
\forall p \neq 0.
\label{S_operator}
\ee
It follows that $\left[ S_{\l}, \r_{r}(p) \right] = -i\ph_{\l}(p) \r_{-r}(p)$ for $p \neq 0$, which means that
\be
e^{i S_{\l}} \r_{r}(p) e^{-i S_{\l}}
	= \r_{r}(p)\cosh\ph_{\l}(p) + \r_{-r}(p)\sinh\ph_{\l}(p)
\label{epiS_rsp_emiS}
\ee
(to see this, define $g_{r}(u) = e^{iu S_{\l}} \r_{r}(p) e^{-iu S_{\l}}$, differentiate, and solve the resulting differential equation $g'_{r}(u) = \ph_{\l}(p) g_{-r}(u)$ with $g_{r}(0) = \r_{r}(p)$).

The interacting ground state is $|\Psi_\l\rangle = e^{-iS_{\l}} |\Psi_0\rangle$ with ground state energy $W_{\l}$.
Since the latter has to be a finite number, this imposes the condition in \eqref{V_conditions_1} that $\hat{\Vint}(p)$ must decay faster than $|p|^{-1}$ for large $|p|$.
We also note that, when deriving \eqref{Diagonalized_Hamiltonain_ops}, $v_{\l}(p)$ is found as
\be
v_{\l}(p) = \vF e^{-2\ph_{\l}(p)},
\label{vL_ph}
\ee
which, using \eqref{S_operator}, gives precisely \eqref{vL}.
(The relation in \eqref{vL_ph} does \emph{not} hold for the general interactions considered in Appendix~\ref{App:General_interactions}; cf.\ the comment below \eqref{Luttinger_parameter}.)
From this we see that, since $v_{\l}(p)$ has to be real, we must impose the stability condition $\l \hat{\Vint}(p) > - \pi \vF/2$ in \eqref{V_conditions_1}.
Moreover, similar to \eqref{H0_rsp},
\be
[D_{\l}, \r_{r}(p)] = - r v_{\l}(p) p \r_{r}(p),
\label{D_rsp}
\ee
which makes clear that $v_{\l}(p)$ is, indeed, to be interpreted as the renormalized Fermi velocity.

Finally, we demonstrate how to construct the eigenstates of $H_{\l}$ from the eigenstates of $H_{0}$.
To construct the latter we need the operators $Q_{r} = \r_{r}(p=0)$ in \eqref{hamiltonian2}, which are Hermitian, and also to introduce new operators $R_{r}$ called \emph{Klein factors}, which are unitary, satisfying
\be
[Q_{r}, R_{r'}] = r\d_{r,r'}R_{r},
\qquad
R_\pm R_\mp = -R_\mp R_\pm,
\qquad
\langle\Psi_0| R_+^{q_+} R_-^{-q_-} |\Psi_0\rangle = \d_{q_+,0} \d_{q_-,0}
\quad
\forall q_+,q_- \in \mathbb{Z}
\label{Qs_and_Rs}
\ee
and commuting with all $\r_{r}(p)$ for $p \neq 0$; see, e.g., \cite{LaMo}.
For the latter, i.e., densities with non-zero momenta, we find it convenient to introduce boson creation and annihilation operators
\be
b^+_{p} = \left( b^-_{p} \right)^{\dagger},
\qquad
b^-_{p} = \begin{cases}
	-i\sqrt{\frac{2\pi}{L|p|}} \r_+(p) & \textnormal{if}\;\, p > 0, \\
	+i\sqrt{\frac{2\pi}{L|p|}} \r_-(p) & \textnormal{if}\;\, p < 0, \\
\end{cases}
\label{bpm}
\ee
respectively, satisfying
\be
\left[ b^-_{p\ppr},b^+_{p'} \right] = \d_{p,p'},
\qquad
\left[ b^\pm_{p\ppr}, b^\pm_{p'} \right] = 0,
\qquad
b^-_p|\Psi_0\rangle = 0
\quad
\forall p \neq 0.
\ee
It follows that the exact eigenstates of $H_{0}$ (the eigenvalues are given by \eqref{interacting_eigenvalue} for $\l = 0$) are
\be
|\Psi_{0,\mathbf{m}}\rangle
	= \left( \prod_{p \neq 0} \frac{\left(b^+_p\right)^{m(p)}}{\sqrt{m(p)!}} \right)
		R_+^{q_+} R_-^{-q_-} |\Psi_{0}\rangle
\label{non-interacting_eigenstate}
\ee
for $\mathbf{m} = \left\{ \left( m(p) \right)_{p \neq 0}, q_+, q_- \right\}$ with $m(p) \in \mathbb{N}$ and $q_+, q_- \in \mathbb{Z}$, where at most finitely many of the $m(p)$ are non-zero.
The ground state of $H_{0}$ is identified as $|\Psi_{0}\rangle = |\Psi_{0,\mathbf{0}}\rangle$.
These states form an orthonormal basis for the fermion Fock space $\mathcal{F}$, and we denote by $\mathcal{D}$ the set of all finite linear combinations of these eigenstates, from which $\mathcal{F}$ can be obtained by norm-completion; see, e.g., \cite{RS}.
The integer pair $(q_+,q_-)$ can be interpreted as the chiral charges of a given state, and $Q_\pm$ and $R_\pm$ as charge and charge-changing operators, respectively.
It follows that
\be
|\Psi_{\l,\mathbf{m}}\rangle
	= e^{-iS_{\l}} |\Psi_{0,\mathbf{m}}\rangle
\label{interacting_eigenstate}
\ee
are the exact eigenstates of $H_{\l}$ with eigenvalues
\be
\mathcal{E}_{\l,\mathbf{m}}
	= \frac{\pi}{L}\vF \left( q_+^2 + q_-^2
		+ \frac{\l\hat{\Vint}(0)}{\pi\vF} (q_+ + q_-)^2 \right)
	+ \sum_{p \neq 0} v_{\l}(p) |p| m(p) + W_{\l}.
\label{interacting_eigenvalue}
\ee
As for the case without interaction, the ground state of $H_{\l}$ is identified as $|\Psi_{\l}\rangle = |\Psi_{\l,\mathbf{0}}\rangle$.

We note that, even though many identities stated involve unbounded operators, they are well-defined on $\mathcal{D}$; see, e.g., \cite{LaMo, dWoLa} for mathematical details.


\section{Luttinger model with an external field: Proof of Theorem~\ref{Thm:H_lambda_mu}}
\label{Sec:Luttinger_model_with_an_external_field}
We extend the exact solution of the Luttinger model to include the case with an external field $\Wint(x)$ satisfying the condition $\int_{-L/2}^{L/2} dx\, \Wint(x) = 0$ as well as periodic boundary conditions (which we recall are needed if $L$ is finite).
In other words, we consider the Hamiltonian
\be
H_{\l,\m} = H_{0} + \l \Hint - \m\Pint
\label{hamiltonian_with_ext_field_in_Fourier_space}
\ee
with
\be
\Pint
= \sum_{r} \sum_{p} \frac{1}{L} \hat{\Wint}(-p) \r_{r}(p),
\ee
where $\hat{\Wint}(p)$ is the Fourier transform of $\Wint(x)$ satisfying $\hat{\Wint}(0) = 0$ due to the condition that the integral of $\Wint(x)$ over the entire space is zero.
To diagonalize $H_{\l,\m}$ we introduce an operator $A_{\l,\m}$ defined as
\be
A_{\l,\m}
= \sum_{r} \sum_{p \neq 0} \frac{1}{L} \hat{\h}(-p) r \r_{r}(p)
\ee
for some suitable function $\h(x)$ with Fourier transform $\hat{\h}(p)$, chosen so that it removes all terms linear in the densities; we note that, without loss of generality, we may set $\hat{\h}(0) = 0$.
It follows that $\left[ A_{\l,\m}, \r_{r}(p) \right] = - p \hat{\h}(p)/2\pi$, which implies
\be
e^{iA_{\l,\m}} \r_{r}(p) e^{-iA_{\l,\m}}	
= \r_{r}(p) - \frac{1}{2\pi} ip \hat{\h}(p)
\label{f1}
\ee
(to see this, define $g_{r}(u) = e^{iuA_{\l,\m}} \r_{r}(p) e^{-iuA_{\l,\m}}$, differentiate, and solve the resulting differential equation $g'_{r}(u) = - ip \hat{\h}(p)/2\pi$ with $g_{r}(0) = \r_{r}(p)$).
This operator applied to each term in the Hamiltonian yields:
\be
\begin{aligned}
e^{iA_{\l,\m}} H_{0} e^{-iA_{\l,\m}}
	& = H_{0} - \vF \sum_{r} \sum_{p} \frac{1}{L} (-ip) \hat{\h}(-p) \r_{r}(p)
	+ \frac{\vF}{2\pi} \sum_{p} \frac{1}{L} p^2 \left| \hat{\h}(p) \right|^2, \\
e^{iA_{\l,\m}} \Hint e^{-iA_{\l,\m}}
	& = \Hint - \vF \sum_{r} \sum_{p} \frac{1}{L}
		\frac{2\hat{\Vint}(p)}{\pi\vF} (-ip)\hat{\h}(-p)\r_{r}(p)
	+ \frac{\vF}{\pi} \sum_{p} \frac{1}{L}
		\frac{2\hat{\Vint}(p)}{\pi\vF} p^2 \left| \hat{\h}(p) \right|^2, \\
e^{iA_{\l,\m}} \Pint e^{-iA_{\l,\m}}
	& =\Pint - \frac{1}{\pi} \sum_{p} \frac{1}{L} \hat{\Wint}(-p) ip \hat{\h}(p),
\end{aligned}
\ee
where $\left| \hat{\h}(p) \right|^2 = \hat{\h}(-p)\hat{\h}(p)$.
Therefore, by choosing $\hat{\h}(p)$ such that
\be
\vF \left( 1 + {2\l\hat{\Vint}(p)}/{\pi\vF} \right) ip \hat{\h}(p) + \m \hat{\Wint}(p) = 0
\label{iph1}
\ee
is satisfied, we find that the Hamiltonian can be diagonalized using $e^{iS_{\l}}$ in \eqref{S_operator}:
\be
e^{iS_{\l}}e^{iA_{\l,\m}} H_{\l,\m} e^{-iA_{\l,\m}}e^{-iS_{\l}}
= H_{0} - T_{\l} + D_{\l} + W_{\l,\m}
\label{hamiltonian_with_ext_field_diagonalized}
\ee
with
\be
W_{\l,\m}
= W_\l - \frac{\m^2\vF}{2\pi} \sum_{p} \frac{1}{L}
	\frac{K_{\l}(p)^2}{v_{\l}(p)^2} \left| \hat{\Wint}(p) \right|^2
\ee
using $v_{\l}(p)$ in \eqref{vL}, $K_{\l}(p)$ in \eqref{Luttinger_parameter}, and $W_{\l}$ in \eqref{Diagonalized_Hamiltonain_ops}, where $|\hat{\Wint}(p)|^2 = \hat{\Wint}(-p)\hat{\Wint}(p)$.
We note that the Luttinger parameter $K_{\l}(p)$ can be expressed as
\be
K_{\l}(p) = e^{2\ph_\l(p)},
\label{Luttinger_parameter_ph}
\ee
which, using \eqref{S_operator}, gives precisely \eqref{Luttinger_parameter}.
(The relation in \eqref{Luttinger_parameter_ph} can be shown to hold for the general interactions considered in Appendix~\ref{App:General_interactions}.)
The ground state of $H_{\l,\m}$ is thus $|\Psi_{\l,\m} \rangle = e^{-iA_{\l,\m}} |\Psi_{\l}\rangle$ with ground state energy $W_{\l,\m}$.
Since the latter must be finite, this implies that the external field must satisfy
\be
\sum_{p} \frac{1}{L} \frac{K_{\l}(p)^2}{v_{\l}(p)^2} \left| \hat{\Wint}(p) \right|^2 < \io.
\label{mu_requirement}
\ee
This condition can be shown to be satisfied by $W(x)$ in \eqref{WxL}.
We also note that \eqref{iph1} can be written as
\be
ip \hat{\h}(p) = -\m \frac{K_{\l}(p)}{v_{\l}(p)} \hat{\Wint}(p)
\label{iph2}
\ee
using $v_{\l}(p)$ in \eqref{vL} and $K_{\l}(p)$ in \eqref{Luttinger_parameter}.
This together with \eqref{f1} yields the identity
\be
e^{iA_{\l,\m}} \r_{r}(p) e^{-iA_{\l,\m}}
= \r_{r}(p) + \frac{\m}{2\pi} \frac{K_{\l}(p)}{v_{\l}(p)} \hat{\Wint}(p)
\quad
\forall p,
\label{epiA_rsp_emiA}
\ee
which shows that the external field appears as a c-number added to $\r_{r}(p)$ after the transformation under $e^{iA_{\l,\m}}$.
Moreover, it follows from \eqref{interacting_eigenstate}, \eqref{interacting_eigenvalue}, and \eqref{hamiltonian_with_ext_field_diagonalized} that the exact eigenstates of $H_{\l,\m}$ are
\be
|\Psi_{\l,\m,\mathbf{m}}\rangle
= e^{-iA_{\l,\m}} e^{-iS_{\l}} |\Psi_{0,\mathbf{m}}\rangle
\label{H_lm_eigenstates}
\ee
for $\mathbf{m} = \left\{ \left( m(p) \right)_{p \neq 0}, q_+, q_- \right\}$ with $m(p) \in \mathbb{N}$ and $q_+, q_- \in \mathbb{Z}$, where at most finitely many of the $m(p)$ are non-zero.
The corresponding eigenvalues are
\be
\mathcal{E}_{\l,\m,\mathbf{m}}
= \frac{\pi}{L}\vF \left( q_+^2 + q_-^2
	+ \frac{\l\hat{\Vint}(0)}{\pi\vF} (q_+ + q_-)^2 \right)
	+ \sum_{p \neq 0} v_{\l}(p) |p| m(p) + W_{\l,\m}.
\label{H_lm_eigenvalues}
\ee
It is clear that the eigenstates of $H_{\lambda,\mu}$ in \eqref{H_lm_eigenstates} form a complete orthonormal basis for the fermion Fock space $\mathcal{F}$ defined in Sec.~\ref{Sec:Exact_solution_of_the_Luttinger_model}.
This implies that $H_{\l,\m}$ is self-adjoint on $\mathcal{F}$ and has a pure point spectrum. 
Moreover, it follows from \eqref{H_lm_eigenvalues} that all eigenvalues of $H_{\lambda,\mu}$ are $\geq W_{\lambda,\mu}$ and that its ground state can be identified as $|\Psi_{\l,\m}\rangle = |\Psi_{\l,\m,\mathbf{0}}\rangle$ which is clearly non-degenerate.
This also implies that $H_{\lambda,\mu}$ is bounded from below.
This proves Theorem~\ref{Thm:H_lambda_mu}.


\section{Evolution following a quench}
\label{Sec:Evolution_following_a_quench}
Assuming the system is in the ground state $|\Psi_{\l,\m} \rangle$ of $H_{\l,\m}$ in \eqref{hamiltonian_with_ext_field_in_Fourier_space}, we study its evolution under $H_{\l'}$ and compute the expectation values of certain local observables.
First, we consider the total density $\r(x) = \r_+(x) + \r_-(x)$ and the current $j(x) = \vF (\r_+(x) - \r_-(x))$.
Second, we compute the fermion two-point correlation function.
As we will see, the results in \eqref{epiS_rsp_emiS} and \eqref{epiA_rsp_emiA} are the workhorse in these computations.
We recall that the expectation values are first computed for finite $L$ and then we pass to the thermodynamic limit (cf.\ Remark~\ref{Remark:Thermodynamic_limit}).

\subsection{Total density and current: Proof of (\ref{Rxt}) and (\ref{Ixt})}
To compute the expectation values of $\r(x)$ and $j(x)$ with respect to $|\Psi_{\l,\m}^{\l'}(t)\rangle$ in \eqref{Psi_lp_l_m}, where we recall that $|\Psi_{\l,\m}\rangle = e^{-iA_{\l,\m}} e^{-iS_{\l}} |\Psi_{0}\rangle$, we first derive the time evolution of $\r_{r}(p)$ under $H_{\l'}$, do a similarity transformation under $e^{iS_{\l}} e^{iA_{\l,\m}}$, take the inverse Fourier transform, and then compute the expectation value with respect to $|\Psi_{0}\rangle$.
Through repeated use of \eqref{epiS_rsp_emiS} and \eqref{epiA_rsp_emiA} we find the following transformation rule for the densities:
\begin{multline}
e^{iS_{\l}} e^{iA_{\l,\m}}
	e^{iH_{\l'}t} \r_{r}(p) e^{-iH_{\l'}t}
e^{-iA_{\l,\m}} e^{-iS_{\l}} \\
= \frac{\m}{2\pi\vF} \frac{K_{\l}(p)}{v_{\l}(p)} \hat{\Wint}(p)
	\left( \vF \cos(pv_{\l'}(p)t) - ir v_{\l'}(p) \sin(pv_{\l'}(p)t) \right) \\
\begin{aligned}
& + \r_{r}(p) \left(
		  u^{++}_{\l,\l'}(p) e^{-ip r v_{\l'}(p)t}
		- u^{--}_{\l,\l'}(p) e^{ip r v_{\l'}(p)t}
	\right) \\
& - \r_{-r}(p) \left(
		  u^{+-}_{\l,\l'}(p) e^{-ip r v_{\l'}(p)t}
		- u^{-+}_{\l,\l'}(p) e^{ip r v_{\l'}(p)t}
	\right)
\end{aligned}
\label{transformed_densities}
\end{multline}
with the coefficients
\be
\begin{aligned}
u^{++}_{\l,\l'}(p) & = \cosh\ph_{\l'}(p) \cosh(\ph_{\l'}(p)-\ph_{\l}(p)),
\quad
	& u^{--}_{\l,\l'}(p) & = \sinh\ph_{\l'}(p) \sinh(\ph_{\l'}(p)-\ph_{\l}(p)), \\
u^{+-}_{\l,\l'}(p) & = \cosh\ph_{\l'}(p) \sinh(\ph_{\l'}(p)-\ph_{\l}(p)),
\quad
	& u^{-+}_{\l,\l'}(p) & = \sinh\ph_{\l'}(p) \cosh(\ph_{\l'}(p)-\ph_{\l}(p)).
\end{aligned}
\label{u_coefficients}
\ee
It thus follows, using the inverse Fourier transform $\r_{r}(x) = \sum_{p} (1/L) \r_{r}(p) e^{ipx}$ and computing the expectation value of \eqref{transformed_densities} with respect to $|\Psi_{0}\rangle$, that the time evolution of the total density is
\be
\langle\Psi_{\l,\m}^{\l'}(t)| \r(x) |\Psi_{\l,\m}^{\l'}(t)\rangle_{L}
= \frac{\m}{2\pi} \sum_{p} \frac{1}{L}
	\frac{K_{\l}(p)}{v_{\l}(p)} \hat{\Wint}(p)
	2 \cos(pv_{\l'}(p)t) e^{ipx},
\label{Rxt_L}
\ee
and, similarly, that the time evolution of the current is
\be
\langle\Psi_{\l,\m}^{\l'}(t)| j(x) |\Psi_{\l,\m}^{\l'}(t)\rangle_{L}
= \frac{\m}{2\pi} \sum_{p} \frac{1}{L}
	\frac{K_{\l}(p)}{v_{\l}(p)} \hat{\Wint}(p) v_{\l'}(p)
	(-2i \sin(pv_{\l'}(p)t)) e^{ipx}.
\label{Ixt_L}
\ee
The results in \eqref{Rxt} and \eqref{Ixt} are obtained from \eqref{Rxt_L} and \eqref{Ixt_L} in the thermodynamic limit.

\subsection{Two-point correlation function: Proof of (\ref{psi_psi_factoring_infinite_L})--(\ref{Srxyt})}
To compute the two-point correlation function we use the following lemma which allows us to obtain the fermion fields $\psi^\pm_{r}(x)$ as limits of regularized fields (for proof see, e.g., \cite{CaHu} or Proposition~3.7 in \cite{LaMo}):

\begin{lemma}
\label{Lemma:Psi_reg}
Let $\e > 0$ and consider
\be
\psi^\pm_{r}(x;\e)
= L^{-1/2}
	e^{\mp i\pi rx Q_{r}/L} R^{\pm r}_{r} e^{\mp i\pi rx Q_{r}/L}
	\exp\left( \sum_{p > 0}\frac{\pi}{Lp} e^{- \e p} \right)
	\exp\left( \mp r\sum_{p \neq 0}\frac{2\pi}{Lp} \r_{r}(p) e^{ipx - \e |p|/2} \right)
\label{vertex_rep}
\ee
with $Q_{r}$ and $R_{r}$ defined in Sec.~\ref{Sec:Exact_solution_of_the_Luttinger_model}.
Then $\psi^\pm_{r}(x;\e)$ converge to $\psi^\pm_{r}(x)$ as $\e \to 0^+$ in the following distributional sense:
\be
\int_{-L/2}^{L/2} dx\, \psi^\pm_{r}(x) e^{\pm ikx}
= \lim_{\e \to 0^+} \int_{-L/2}^{L/2} dx\, \psi^\pm_{r}(x;\e) e^{\pm ikx}
\label{reg_psi_to_psi}
\ee
for all $k = \pi(2n+1)/L$ with $n \in \mathbb{N}$, where the limit on the right-hand side is in the strong sense \textnormal{\cite{RS}} on the domain $\mathcal{D}$ defined in Sec.~\ref{Sec:Exact_solution_of_the_Luttinger_model}.
\end{lemma}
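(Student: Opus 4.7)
The plan is to compare the two sides of \eqref{reg_psi_to_psi} mode-by-mode, reducing to the standard bosonization-identity argument. Since by the Fourier expansion of $\psi^\pm_r(x)$ used in \eqref{hamiltonian1} the left-hand side of \eqref{reg_psi_to_psi} equals $L^{1/2} a^\pm_{r,k}$ (a single-mode creation or annihilation operator), it suffices to show that the right-hand side, after Fourier projection, converges strongly on $\mathcal{D}$ to the same operator as $\e \to 0^+$.

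First, I would rewrite $\psi^\pm_r(x;\e)$ in normal-ordered form. Splitting
\[
\sum_{p \neq 0} \frac{2\pi}{Lp} \r_r(p) e^{ipx - \e |p|/2}
\]
into its positive-frequency (creation) and negative-frequency (annihilation) parts with respect to $|\Psi_0\rangle$, which by \eqref{fg} depends on the chirality $r$, and then applying the Baker-Campbell-Hausdorff formula using the density commutator in \eqref{fg}, the second density exponential in \eqref{vertex_rep} factorizes into a normal-ordered product $e^{A_+} e^{A_-}$ times a c-number $\exp\bigl(-\sum_{p>0}\pi/(Lp) e^{-\e p}\bigr)$. This c-number precisely cancels the explicit prefactor $\exp\bigl(\sum_{p>0}\pi/(Lp) e^{-\e p}\bigr)$ in \eqref{vertex_rep}, so the resulting normal-ordered expression is well-defined on $\mathcal{D}$ uniformly in $\e \geq 0$: on any basis state of the form \eqref{non-interacting_eigenstate}, the annihilation exponential truncates to a polynomial in the $b^-_p$, while the creation exponential is a norm-convergent power series in the $b^+_p$.

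It then remains to identify the Fourier projection of $\psi^\pm_r(x;\e)$ with $L^{1/2} a^\pm_{r,k}$ in the limit $\e \to 0^+$. The cleanest route is to verify directly that $\psi^\pm_r(x;\e)$ has, on $\mathcal{D}$, the same commutation relations with the densities $\r_{r'}(p)$ and the chiral charges $Q_{r'}$ as $\psi^\pm_r(x)$, apart from the harmless regulator $e^{-\e|p|/2}$ appearing in the density commutator, together with an explicit action on $|\Psi_0\rangle$ that is read off from the normal-ordered form (a coherent-state-type vertex). These data, combined with the Klein-factor structure from \eqref{Qs_and_Rs}, determine the Fourier modes of $\psi^\pm_r(x;\e)$ uniquely as operators on $\mathcal{D}$, and in the limit $\e \to 0^+$ they reduce to the defining relations of the fermion modes $a^\pm_{r,k}$.

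The main obstacle is the bookkeeping of the zero-mode phases $e^{\mp i\pi r x Q_r/L}$ and the Klein factors $R_r^{\pm r}$, which encode the correct chiral charge-sector shift and convert bosonic commutators into fermion-like anticommutation relations between different chiralities. The infinite sums in the bosonic sector are controlled on $\mathcal{D}$ by the $e^{-\e |p|/2}$ regulator; after Fourier-projecting to a single mode $k$, only finitely many creation modes contribute to the image of any fixed basis state, so the $\e \to 0^+$ limit may be taken termwise by dominated convergence, yielding strong convergence of \eqref{reg_psi_to_psi} on $\mathcal{D}$.
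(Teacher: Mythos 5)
The paper does not prove this lemma itself; it defers to \cite{CaHu} and Proposition~3.7 of \cite{LaMo}, and your outline is essentially the argument used there: normal-order the vertex operator (the Baker--Campbell--Hausdorff c-number $\exp(-\sum_{p>0}\pi e^{-\e p}/Lp)$ does cancel the explicit prefactor in \eqref{vertex_rep}), match the commutation relations with the densities and the charge/Klein-factor structure, and identify the Fourier modes with $L^{1/2}a^{\pm}_{r,k}$ by an irreducibility (uniqueness) argument. One intermediate claim is wrong as stated, though: the normal-ordered expression is \emph{not} well-defined on $\mathcal{D}$ uniformly in $\e \geq 0$, because the creation exponential applied to $|\Psi_0\rangle$ has squared norm $\exp\bigl(\sum_{p>0}(2\pi/Lp)\,e^{-\e p}\bigr)$, which diverges (logarithmically in $\e$) as $\e \to 0^+$ --- this is precisely why the lemma must be formulated in the smeared, distributional sense of \eqref{reg_psi_to_psi} rather than pointwise in $x$. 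Your final paragraph repairs this correctly: after projecting onto a single Fourier mode $k$, the image of any basis vector of the form \eqref{non-interacting_eigenstate} lies in a fixed momentum-and-charge sector, which contains only finitely many bosonic configurations, so only finitely many terms of the creation series survive and the $\e \to 0^+$ limit can be taken termwise. To close the argument cleanly, run the uniqueness step on the limit operator: first establish existence of the strong limit on $\mathcal{D}$ as above, then verify that it has the same commutators with all $\r_{r'}(p)$ and the same action on every highest-weight vector $R_+^{q_+}R_-^{-q_-}|\Psi_0\rangle$ (not just on $|\Psi_0\rangle$) as $L^{1/2}a^{\pm}_{r,k}$, and conclude equality by induction on the number of density creation operators applied to the highest-weight vectors.
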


With these fields, and by again repeatedly using \eqref{epiS_rsp_emiS} and \eqref{epiA_rsp_emiA}, we find, similar to \eqref{transformed_densities},
\begin{multline}
e^{iS_{\l}} e^{iA_{\l,\m}}
	e^{iH_{\l'}t} \psi^\pm_{r}(x;\e) e^{-iH_{\l'}t}
e^{-iA_{\l,\m}} e^{-iS_{\l}} \\
= \exp\left( \mp ir \frac{\m}{\vF} \sum_{p \neq 0} \frac{1}{L}
		\frac{K_{\l}(p)}{v_{\l}(p)} \hat{\Wint}(p)
		\left(
			\vF \cos(pv_{\l'}(p)t) - ir v_{\l'}(p) \sin(pv_{\l'}(p)t)
		\right) \frac{e^{ipx - \e |p|/2}}{ip}
	\right) \\
\begin{aligned}
& \times L^{-1/2}
		e^{\mp i\pi rx Q_{r}/L}
		e^{iH_{\l'}t} R_{r}^{\pm r} e^{-iH_{\l'}t}
		e^{\mp i\pi rx Q_{r}/L}
	\exp\left( \sum_{p > 0}\frac{\pi}{Lp} e^{- \e p} \right) \\
& \times \exp\left( \mp r\sum_{p \neq 0}\frac{2\pi}{Lp}
	\r_{r}(p) \left(
		  u^{++}_{\l,\l'}(p) e^{-ip r v_{\l'}(p)t}
		- u^{--}_{\l,\l'}(p) e^{ip r v_{\l'}(p)t}
	\right) e^{ipx - \e |p|/2}
	\right) \\
& \times \exp\left( \pm r\sum_{p \neq 0}\frac{2\pi}{Lp}
	\r_{-r}(p) \left(
		  u^{+-}_{\l,\l'}(p) e^{-ip r v_{\l'}(p)t}
		- u^{-+}_{\l,\l'}(p) e^{ip r v_{\l'}(p)t}
	\right) e^{ipx - \e |p|/2}
	\right)
\end{aligned}
\label{psi_h0}
\end{multline}
with
\be
e^{iH_{\l'}t} R_{r}^{\pm r} e^{-iH_{\l'}t}
= e^{\pm i\pi v_{\l'}t \sum_{s=\pm} (K_{\l'}^{-1}+sK_{\l'}^{\vphantom{-1}})Q_{sr}/2L}
R_{r}^{\pm r}
e^{\pm i\pi v_{\l'}t \sum_{s=\pm} (K_{\l'}^{-1}+sK_{\l'}^{\vphantom{-1}})Q_{sr}/2L}
\ee
and the coefficients in \eqref{u_coefficients}.
It follows that the two-point correlation function is
\be
\langle\Psi_{\l,\m}^{\l'}(t)| \psi^+_{r}(x) \psi^-_{r}(y) |\Psi_{\l,\m}^{\l'}(t)\rangle_{L}
= e^{-ir\vF^{-1} A_{r,L}(x,y,t) (x-y)} S_{r,L}(x,y,t)
\label{psi_psi_factoring_finite_L}
\ee
with
\be
A_{r,L}(x,y,t)
= \m \sum_{p \neq 0}\frac{1}{L}
		\frac{K_{\l}(p)}{v_{\l}(p)} \hat{\Wint}(p)
		\left(
			\vF \cos(pv_{\l'}(p)t) - ir v_{\l'}(p) \sin(pv_{\l'}(p)t)
		\right) \frac{e^{ipx} - e^{ipy}}{ip(x-y)}
\label{Arxyt_L}
\ee
and
\be
S_{r,L}(x,y,t)
= \langle\Psi_{\l,0}^{\l'}(t)| \psi^+_{r}(x) \psi^-_{r}(y) |\Psi_{\l,0}^{\l'}(t)\rangle_{L}.
\label{Srxyt_L}
\ee
The results in \eqref{psi_psi_factoring_infinite_L}--\eqref{Srxyt} are obtained from \eqref{psi_psi_factoring_finite_L}--\eqref{Srxyt_L} in the thermodynamic limit.

\subsection{Two-point correlation function: Proof of (\ref{l_lp_two_point_func})--(\ref{eta_l_lp_equal_eta_l_gamma_l_lp})}
For the proof we use the regularized operators $\psi^q_{r}(x;\e)$ in Lemma~\ref{Lemma:Psi_reg} for $q = \pm$ and $\e > 0$.
It follows by setting $\m = 0$ in \eqref{psi_h0} that
\begin{multline}
e^{iS_{\l}}e^{iH_{\l'} t} \psi^q_{r}(x;\e) e^{-iH_{\l'} t}e^{-iS_{\l}} \\
	= L^{-1/2} e^{-i\pi qrx Q_{r}/L} e^{iH_{\l'}t} R_{r}^{qr} e^{-iH_{\l'}t} e^{-i\pi qrx Q_{r}/L}
		Z_{a,\e,L}(t)
		W_{1,r}^{q} (x,t;\e)
		W_{2,-r}^{q} (x,t;\e)
\label{vertex_op_rep}
\end{multline}
with the time-dependent factor
\be
Z_{a,\e,L}(t)
	= \exp \left( 
			-\sum_{p>0} \frac{\pi}{L}
			\frac{\h_{\l,\l'}(p) - \g_{\l,\l'}(p) \cos (2p v_{\l'}(p)t)}{p}
			e^{-\e p}
		\right)
\label{Z_factor}
\ee
(the dependence on the interaction range $a > 0$ is commented on further below) and so-called \emph{vertex operators},
\be
\begin{aligned}
W_{1,r}^{q} (x,t;\e)
	& = \bosonnoord{ \exp \biggl(
			-qr \sum_{p \neq 0} \frac{2\pi}{Lp} \r_{r}(p) \Bigl(
				  u^{++}_{\l,\l'}(p) e^{ip(x-r v_{\l'}(p)t)}
				- u^{--}_{\l,\l'}(p) e^{ip(x+r v_{\l'}(p)t)}
			\Bigr) e^{-\e|p|/2}
		\biggr) }, \\
W_{2,-r}^{q} (x,t;\e)
	& = \bosonnoord{ \exp \biggl(
			qr \sum_{p \neq 0} \frac{2\pi}{Lp} \r_{-r}(p) \Bigl(
				  u^{+-}_{\l,\l'}(p) e^{ip(x-r v_{\l'}(p)t)}
				- u^{-+}_{\l,\l'}(p) e^{ip(x+r v_{\l'}(p)t)}
			\Bigr) e^{-\e|p|/2}
		\biggr) },
\end{aligned}
\ee
where $\bosonnoord{\cdots}$ denotes \emph{boson normal ordering}, i.e., all boson creation operators are placed to the left of all boson annihilation operators; see, e.g., \cite{LaMo} for precise definitions.
Using the coefficients in \eqref{u_coefficients} one finds that $\h_{\l,\l'}(p)$ and $\g_{\l,\l'}(p)$ are given by
\be
\begin{aligned}
\h_{\l,\l'}(p)
& = 2 \left( u^{++}_{\l,\l'}(p)^2 + u^{--}_{\l,\l'}(p)^2 - 1 \right) \\
& = 2 \left( u^{+-}_{\l,\l'}(p)^2 + u^{-+}_{\l,\l'}(p)^2 \right) \\
& = \cosh 2\ph_{\l'}(p) \cosh 2(\ph_{\l'}(p)-\ph_{\l}(p)) - 1
\end{aligned}
\label{eta_ph_php}
\ee
and
\be
\begin{aligned}
\g_{\l,\l'}(p)
& = 4 u^{++}_{\l,\l'}(p) u^{--}_{\l,\l'}(p) \\
& = 4 u^{+-}_{\l,\l'}(p) u^{-+}_{\l,\l'}(p) \\
& = \sinh 2\ph_{\l'}(p) \sinh 2(\ph_{\l'}(p)-\ph_{\l}(p)),
\end{aligned}
\label{gamma_ph_php}
\ee
respectively.
Note that the dependence in \eqref{Z_factor} on the interaction range $a > 0$ is indicated to emphasize that $Z_{a,\e,L}(t)$ vanishes in the local limit $a \to 0^+$; see, e.g., \cite{LaMo} for details.

The technical parts of the computations are omitted here since they are identical to those in \cite{LaMo}, where they are explained in detail.
We mention only that it follows from \eqref{vertex_op_rep}--\eqref{gamma_ph_php} and Proposition~3.4 in \cite{LaMo} that
\begin{multline}
	Z_{a,\e,L}(t)^{2} \langle\Psi_0| W_{1,r}^{+} (x,t;\e) W_{2,-r}^{+} (x,t;\e)
		W_{1,r}^{-} (y,t;\e) W_{2,-r}^{-} (y,t;\e) |\Psi_0\rangle_L \\
	= \exp \left(
			\sum_{p>0} \frac{2\pi}{L} \left(
			\frac{e^{ipr(x-y)}}{p}
			+ \frac{\h_{\l,\l'}(p) - \g_{\l,\l'}(p) \cos (2p v_{\l'}(p)t)}{p}
			\left( \cos p(x-y) - 1 \right)
			\right)
			e^{-\e p}
		\right),
\label{result_from_Prop_3_4_in_LaMo}
\end{multline}
where the only factor which does not depend on $\l$ or $\l'$ corresponds to
\be
\langle\Psi_{0,0}^{0}(t)| \psi^+_{r}(x) \psi^-_{r}(y) |\Psi_{0,0}^{0}(t)\rangle
= \lim_{\e \to 0^+} \lim_{L \to \io}
	\frac{1}{L} \exp \left( \sum_{p>0} \frac{2\pi}{L}
		\frac{e^{ipr(x-y)}}{p} e^{-\e p} \right)
= \frac{i}{2\pi r(x-y) + i0^+}.
\label{psi_p_psi_m_non-interacting}
\ee
The latter is precisely the two-point correlation function in the thermodynamic limit when there is no interaction (and no domain wall).
Moreover, the operators $Q_{r}$ and $R_{r}$ in \eqref{vertex_op_rep} do not contribute if we take $L \to \io$ other than that the last identity in \eqref{Qs_and_Rs} puts constraints on which two-point correlation functions are non-zero: the ground state expectation value of $\psi^{q\ppr}_{r\ppr}(x)\psi^{q'}_{r'}(y)$ for $r, r' = \pm$ and $q, q' = \pm$ can be non-zero only if $r = r'$ and $q = -q'$.
Therefore, after taking the limit $\e \to 0^+$, the only non-zero two-point correlation function in the thermodynamic limit is the one given in \eqref{l_lp_two_point_func} with $\h_{\l,\l'}(p)$ and $\g_{\l,\l'}(p)$ in \eqref{eta_gamma_K_Kp}; this follows from \eqref{Luttinger_parameter_ph} and \eqref{vertex_op_rep}--\eqref{psi_p_psi_m_non-interacting}.
Finally, we note that the identity in \eqref{eta_l_lp_equal_eta_l_gamma_l_lp} follows from \eqref{eta_ph_php} and \eqref{gamma_ph_php} using $\h_{\l}(p) = \h_{\l,\l}(p) = \cosh 2\ph_{\l}(p) - 1$.


\section{Approach to steady state: Proof of Theorem~\ref{Thm:Asymptotics}}
\label{Sec:Approach_to_steady_state}
The asymptotic behaviors of $I(x,t)$, $R(x,t)$, and $A_{r}(x,y,t)$ can be studied by considering the integrals in \eqref{Rxt}, \eqref{Ixt}, and \eqref{Arxyt}, respectively.
In Appendix~\ref{App:Asymptotics} we show that
\be
|R(x,t) - R| \le \frac{C_R}{t},
\qquad
|I(x,t) - I| \le \frac{C_I}{t},
\qquad
|A_{\pm}(x,y,t) - A_{\pm}| \le \frac{C_A}{t}
\quad
\forall x,y
\label{R_I_A_asymptotics}
\ee
for certain finite constants $C_R, C_I, C_A > 0$ with
\be
R = 0,
\qquad
I = \frac{\m}{2\pi} \frac{K_{\l}v_{\l'}}{v_{\l}},
\qquad
A_{\pm}
= \pm \frac{\m}{2} \frac{K_{\l}v_{\l'}}{v_{\l}},
\label{R0_I0_Ar0}
\ee
which are manifestly translation invariant.
From this the result in \eqref{Rxt_Ixt_Arxyt_Asymptotics} follows.
Similarly, the asymptotic behavior of $S_{r}(x,y,t)$ can be studied by considering the integral in \eqref{l_lp_two_point_func}.
In Appendix~\ref{App:Asymptotics} we prove that 
\be
\lim_{t \to \io} S_{r}(x,y,t)
	= \frac{i}{2\pi r (x-y) + i0^+}
	\exp \left( \int_0^\io dp \frac{\h_{\l,\l'}(p)}{p}
		\left( \cos p(x-y) - 1 \right) \right),
\label{l_lp_two_point_func_t_to_io}
\ee
where the contribution containing $\g_{\l,\l'}(p)$, which is non-zero if $\l \neq \l' \neq 0$, has disappeared in the limit $t \to \io$.
The result in \eqref{acc} follows from \eqref{l_lp_two_point_func_t_to_io} using \eqref{psi_psi_factoring_infinite_L} and \eqref{Rxt_Ixt_Arxyt_Asymptotics}.
This completes the proof of Theorem~\ref{Thm:Asymptotics}.

A result similar to \eqref{l_lp_two_point_func_t_to_io} can be derived for the non-equal-time two-point correlation function $\langle\Psi_{\l,0}^{\l'}(t)| \psi^+_{r}(x_+,t_+) \psi^-_{r}(x_-,t_-) |\Psi_{\l,0}^{\l'}(t)\rangle$ for fields $\psi^\pm_{r}(x_\pm,t_\pm) = e^{iH_{\l'}t_\pm} \psi^\pm_{r}(x_\pm) e^{-iH_{\l'}t_\pm}$ with $t_\pm \ll t$ using the methods in Sec.~\ref{Sec:Evolution_following_a_quench}.
Although we omit the explicit expression we note that it shares the property with \eqref{l_lp_two_point_func_t_to_io} that contributions containing $\g_{\l,\l'}(p)$ disappear as $t \to \io$.
Moreover, these particular contributions are the only ones which are not time-translation invariant, one example is a dependence on $t_+ + t_-$, and, since they disappear, time-translation invariance is recovered asymptotically in time.
One could naively try to reproduce this by treating the final state as a ground state of some Luttinger Hamiltonian (cf.\ Sec.~\ref{Sec:Luttinger_model_with_constant_chemical_potentials}).
However, for $\l \neq \l' \neq 0$, the same non-equal-time two-point correlation function for such a ground state of a Luttinger Hamiltonian, which cannot be $H_{\l'}$ since $\h_{\l'}(p) \neq \h_{\l,\l'}(p)$, depends, in general, on $t_+ + t_-$.
Therefore, since such a state is not time-translation invariant, it cannot correspond to the final steady state when $\l \neq \l' \neq 0$; cf.\ also \cite{IC}.
On the other hand, we show in Sec.~\ref{Sec:Luttinger_model_with_constant_chemical_potentials} that for $\l = \l'$ the two-point correlation function supports such a description, but with different chemical potentials for right- and left-moving fermions if $\mu \neq 0$.
Moreover, since $\g_{\l,\l'}(p) = 0$ if $\l = \l'$, this is also true for the non-equal-time correlation function.


\section{Luttinger model with constant chemical potentials: Proof of Theorem~\ref{Thm:Hamiltonian_2}}
\label{Sec:Luttinger_model_with_constant_chemical_potentials}
We now show that, for $\l = \l'$, the steady state, obtained asymptotically in time, has the same equal-time two-point correlation function in the thermodynamic limit as the ground state of the Hamiltonian in \eqref{hamiltonian2} with constant chemical potentials $\m_\pm$ satisfying $\m_+ + \m_- = 2\m_0$.
We again recall that the model is defined for finite $L$ (cf.\ Remark~\ref{Remark:Thermodynamic_limit}).

To diagonalize the Hamiltonian $H$ in \eqref{hamiltonian2} we use so-called \emph{large gauge transformations}, which are implemented by the unitary operators $R_{r}$ defined in Sec.~\ref{Sec:Exact_solution_of_the_Luttinger_model}:
\be
\begin{aligned}
R_{r}^{-r w_{r}} \psi^{\pm}_{r'}(x) R_{r}^{r w_{r}}
& = e^{\mp ir \d_{r,r'} w_r 2\pi x/L } \psi^{\pm}_{r'}(x), \\
R_{r}^{-r w_{r}} \r_{r'}(x) R_{r}^{r w_{r}}
& = \r_{r'}(x) + \d_{r,r'} w_{r}/L
\end{aligned}
\label{large_gauge_transf}
\ee
for $w_{r} \in \mathbb{Z}$; see, e.g., \cite{LaMo}.
We note that the requirement that the $w_{r}$ are integers constrain the possible values of $\m_{r} - \m_0$ for which the following arguments apply, but this will be of no consequence in the thermodynamic limit and can thus be ignored.
For the densities, the transformation in \eqref{large_gauge_transf} implies that there is a shift in the zero-mode contribution, which we will see corresponds to a shift in the ground state.
It follows from Sec.~\ref{Sec:Exact_solution_of_the_Luttinger_model} that
\be
e^{iS_{\l}} H e^{-iS_{\l}}
= H_{0} - T_{\l} + D_{\l} + W_{\l} - \sum_{r} (\m_{r} - \m_0) Q_{r}
\ee
for $H$ in \eqref{hamiltonian2}, and therefore
\begin{multline}
R_+^{-w_+} R_-^{w_-} e^{iS_{\l}} H e^{-iS_{\l}} R_+^{w_+} R_-^{-w_-} \\
= \frac{\pi}{L} \vF \left( \sum_{r} (Q_{r} + w_{r})^2
+ \frac{\l \hat{\Vint}(0)}{\pi\vF} \left( \sum_{r} Q_{r} + w_{r} \right)^2 \right)
+ D_{\l} + W_{\l} - \sum_{r} (\m_{r} - \m_0) (Q_{r} + w_{r}).
\end{multline}
To make this diagonal we must choose $w_\pm = L(\m_\pm - \m_0)/2\pi\vF$, which yields
\be
\begin{aligned}
R_+^{-w_+} R_-^{w_-} \psi^{\pm}_{r}(x) R_+^{w_+} R_-^{-w_-}
	& = e^{\mp ir\vF^{-1} (\m_{r} - \m_0) x} \psi^{\pm}_{r}(x), \\
R_+^{-w_+} R_-^{w_-} \r_{r}(x) R_+^{w_+} R_-^{-w_-}
	& = \r_{r}(x) + \frac{\m_{r} - \m_0}{2\pi\vF},
\end{aligned}
\label{large_gauge_transf_w_fixed}
\ee
and
\be
R_+^{-w_+} R_-^{w_-} e^{iS_{\l}} H e^{-iS_{\l}} R_+^{w_+} R_-^{-w_-}
= H_{0} - T_{\l} + D_{\l} + W_{\l}
	- \frac{L}{4\pi\vF} \sum_{r} (\m_{r} - \m_0)^2
\label{large_gauge_transf_w_fixed_Hamiltonian}
\ee
(where we have used the condition $\m_+ + \m_- = 2\m_0$).
This means that the ground state of $H$ is $|\Psi\rangle = R_+^{w_+} R_-^{-w_-} |\Psi_\l\rangle$ with ground state energy $W_{\l} - {L} (\m_+ - \m_-)^2 / {8\pi\vF}$.
We note that $|\Psi\rangle$ can also be interpreted as an excited eigenstate of $H_{\l}$ since it has chiral charges $w_\pm \neq 0$ if $\m_\pm \neq \m_0$ (cf.\ \eqref{non-interacting_eigenstate}).
Using the results in Sec.~\ref{Sec:Exact_solution_of_the_Luttinger_model} we can construct all eigenstates and eigenvalues of $H$ for finite $L$, but we omit the details since they are similar to \eqref{non-interacting_eigenstate}--\eqref{interacting_eigenvalue}.
This proves the statements about $H$ in Theorem~\ref{Thm:Hamiltonian_2}; cf.\ the end of Sec.~\ref{Sec:Luttinger_model_with_an_external_field}.
Moreover, as explained in Sec.~\ref{Sec:Evolution_following_a_quench}, we can compute the two-point correlation function $\langle\Psi| \psi^+_{r}(x) \psi^-_{r}(y) |\Psi\rangle_L$ using the first identity in \eqref{large_gauge_transf_w_fixed}, which in the thermodynamic limit gives the result in \eqref{psi_psi_correl_2}, but again we omit the details since they are similar to \eqref{vertex_op_rep}--\eqref{psi_p_psi_m_non-interacting}.
This completes the proof of Theorem~\ref{Thm:Hamiltonian_2}.

Lastly, we note that the identities in \eqref{psi_tilde_to_psi} relating the fields $\psi^\pm_{r}(x)$ and $\tilde\psi^\pm_{r}(x)$ can be proven in the same way as \eqref{large_gauge_transf_w_fixed}.
The only difference is in the transformation of the corresponding Hamiltonians: it follows from \eqref{hamiltonian1} and \eqref{psi_tilde_to_psi} that there must be counterterms in \eqref{hamiltonian}, not present in \eqref{hamiltonian1}, for the Hamiltonians expressed using $\tilde\psi^\pm_{r}(x)$ and $\psi^\pm_{r}(x)$ to be the same; cf.\ also \cite{BGM}.


\section{Concluding remarks}
\label{Sec:Concluding_remarks}
We studied properties of the Luttinger model with a non-local interaction following a quench from a domain wall initial state.
The evolution of the local observables we consider is ballistic, but, at the same time, dispersive if $\l' \neq 0$.
The dispersion effects appear as oscillations traveling ahead of the wave fronts as in Figs.~\ref{Fig:Total_density}~and~\ref{Fig:Current}.
This is in agreement with analytical and numerical results in \cite{SM, LaMi} for quantum XXZ spin chains.
However, it should be noted that, since the model in \cite{LaMi} is mapped to the Luttinger model with a delta-function interaction, the evolution in that case is non-dispersive, similar to our non-interacting case, and there are divergences which are absent in our results.
Asymptotically in time, the system reaches a steady state that retains memory of the initial state, meaning there is equilibration (stabilization).
This was previously found in \cite{LaMi} for quantum XXZ spin chains and in \cite{RDYO} for non-interacting bosons.
We also showed that, if $\l \neq \l' \neq 0$, the final steady state has exponents which are different from those in equilibrium.
This generalizes previous results in \cite{C, IC, MW} to more general interaction quenches.

The final state is clearly different from the ground state of $H_{\l'}$ in general (i.e., if $\l \neq \l' \neq 0$ or $\m \neq 0$) and thus cannot be described as a thermal state obtained from the usual canonical ensemble.
Since the system is integrable this is not surprising: thermalization in the usual sense is not expected but rather in the sense of a generalized canonical ensemble; see, e.g., \cite{RDYO, IC}.
For $\l = \l'$ we showed that the (equal-time) two-point correlation function for the final state is equal to that for the ground state of the Hamiltonian $H$ in \eqref{hamiltonian2} with different chemical potentials for right- and left-moving fermions (this can also be verified for the non-equal-time correlation function).
If true for all $N$-point correlation functions, this would suggest that the Gibbs measure of the final state is $e^{-\b H}$ with $H$ in \eqref{hamiltonian2} in the zero-temperature limit $\b \to \io$.
Therefore, when $\l = \l'$, the suggested generalized canonical ensemble needed to describe the final state consists of the conserved quantities $H_{\l'}$ and $Q_\pm$.
However, for $\l \neq \l' \neq 0$ we showed that this particular generalized canonical ensemble is too simple.
To test whether a generalized canonical ensemble can describe the final steady state, when $\l \neq \l' \neq 0$, additional conserved quantities (which are higher-order polynomials in the commuting operators $n^{\vphantom{+}}_p = b^+_p b^-_p$ with $b^\pm_p$ in \eqref{bpm}) would have to be included.
This problem is left open.

We stress that universality of conductance is found via a fully non-equilibrium approach.
Previous explanations of this universality were in a near-to-equilibrium setting; see, e.g., \cite{MS, K, ACF1, ACF2}.
Whether the conductance in the Luttinger model is renormalized or not has been debated. 
Earlier works found that it should be renormalized by the interaction \cite{KF}, but later experimental and theoretical results found it to be universal; see, e.g., \cite{ACF1}.
We found that these different results can be reconciled in a dynamical approach, which we believe sheds new light on this issue.
The key observation is that the current and the chemical potential difference are renormalized by the interaction, but that the renormalizations cancel when computing the Landauer conductance \cite{La2, ACF2}.
On the other hand, if one uses a non-renormalized chemical potential difference when defining the conductance, the renormalizations do not cancel.
This is summarized by the relation in \eqref{I_Gll_mLMR_G_mpmm}, which, in particular, shows that $G = I/(\m_+ - \m_+) = 1/2\pi$ (in units where $e = \hbar = 1$) is always universal.
This relation also makes clear that our results for the Luttinger model are not in conflict with the non-universal results in \cite{SM} for quantum XXZ spin chains.
Indeed, setting $\l = \l'$ in \eqref{Rxt_Ixt_Arxyt_Asymptotics} yields the non-universal value $G_{\l,\l} = {I}/{\m} = {K_{\l}}/{2\pi}$ in \cite{KF}, which is in agreement with the result in \cite{SM} if one takes spin degrees of freedom into account.

It is natural to conjecture that universality of conductance persists even when no exact solution is available.
In equilibrium, renormalization group techniques can explain universality also in cases without exact solutions.
It would be interesting to extend such an approach to a non-equilibrium setting.
We also note a connection between the Landauer conductance and the Thouless conductance for reflectionless transport of free fermions studied in \cite{BJLP}.
It would be interesting to investigate if the mechanism explored in \cite{BJLP} is related to our universality result.
Finally, we mention that we expect our method to be applicable to more general states, such as, e.g., thermal states.

\paragraph*{Acknowledgements:}
We would like to thank Natan Andrei, J\"{u}rg Fr\"{o}hlich, Giovanni Gallavotti, Krzysztof Gaw\k{e}dski, and P\"{a}ivi T\"{o}rm\"{a} for helpful and inspiring discussions.
E.~L.\ acknowledges financial support by the G\"{o}ran Gustafsson Foundation (GGS 1221).
The research by J.~L.~L.\ was supported by NSF Grant No.\ DMR 1104500 and by AFOSR Grant No.\ FA9550-16.
V.~M.\ acknowledges support from the  PRIN-MIUR National Grant ``Geometric and analytic theory of Hamiltonian systems in finite and infinite dimensions.''
P.~M.\ is thankful for financial support from the Olle Eriksson Foundation for Materials Engineering (No.\ VT-2015-0001).


\begin{appendix}


\section{Asymptotics}
\label{App:Asymptotics}
To study the asymptotic behavior of $R(x,t)$, $I(x,t)$, $A_{r}(x,y,t)$, and $S_{r}(x,y,t)$ in Sec.~\ref{Sec:Approach_to_steady_state} we use the following lemma (the proof is given at the end of this appendix):

\begin{lemma}
\label{Lemma:Asymptotics}
Let
\be
F(t) = \strokedint_{-\io}^{\io} \frac{dp}{2\pi} \frac{f(p)}{p} e^{ip u(p)t},
\qquad
F_0(t) = \strokedint_{-\io}^{\io} \frac{dp}{2\pi} \frac{f(0)}{p} e^{ip u(0)t}
\label{FF0}
\ee
(interpreted as Cauchy principal values) with real-valued functions $f(p)$ and $u(p)$ (for $p \in \mathbb{R}$) satisfying the following conditions:
\be
\begin{aligned}
& 1. \quad (f(p) - f(0)) p^{-1} \in C^{1}(\mathbb{R})\;\, \textnormal{(a.e.)}, \\
& 2. \quad (f(p) - f(0)) p^{-1},\; d((f(p) - f(0)) p^{-1})/dp \in L^{1}(\mathbb{R}), \\
& 3. \quad (f(p) - f(0))p^{-1} \to 0 \;\, \textnormal{as} \;\, p \to \pm \io,
\end{aligned}
\label{f_conditions}
\ee
and
\be
\begin{aligned}
& 1. \quad u(p) \in C^{2}(\mathbb{R}) \;\, \textnormal{(a.e.)},\\
& 2. \quad u(p),\;
		du(p)/dp,\;
		d^2u(p)/dp^2 \in L^{1}(\mathbb{R}),\\
& 3. \quad u^{\textnormal{g}}(p) = {d(u(p)p)}/{dp} > 0 \quad \forall p.
\end{aligned}
\label{u_conditions}
\ee
Then
\be
|F(t) - F_0(t)| \le \frac{C}{t}
\label{FF0Ct}
\ee
for some finite constant $C > 0$.
\end{lemma}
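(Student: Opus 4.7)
The natural strategy is to split $F(t)-F_0(t)$ into a part whose amplitude vanishes at $p=0$ (removing the singularity) and a part involving only the difference of exponentials, and then use a single integration by parts on each piece. Concretely, I would write
\[
F(t)-F_0(t) = A(t) + f(0) B(t),
\qquad
A(t) = \int_{-\io}^{\io}\frac{dp}{2\pi}\,\frac{f(p)-f(0)}{p}\,e^{ipu(p)t},
\qquad
B(t) = \strokedint_{-\io}^{\io}\frac{dp}{2\pi}\,\frac{e^{ipu(p)t}-e^{ipu(0)t}}{p}.
\]
Because $g(p):=(f(p)-f(0))/p$ is continuous at $0$ by condition~1 on $f$, the integral $A(t)$ is an ordinary integral, and $B(t)$ inherits whatever singularity is present at $p=0$.

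For $A(t)$, I would exploit the fact that the hypothesis $u^{\textnormal{g}}(p)>0$ means the phase $\phi(p)=p u(p)t$ has no stationary points, so that $e^{ipu(p)t}=(itu^{\textnormal{g}}(p))^{-1}\, (d/dp)e^{ipu(p)t}$. A single integration by parts gives
\[
A(t) = -\frac{1}{it}\int_{-\io}^{\io}\frac{dp}{2\pi}\,\frac{d}{dp}\!\left(\frac{g(p)}{u^{\textnormal{g}}(p)}\right) e^{ipu(p)t},
\]
where the boundary term vanishes by condition~3 on $f$. The conditions on $f$ and $u$ guarantee that $d(g/u^{\textnormal{g}})/dp$ is in $L^1$, giving $|A(t)|\le C_A/t$.

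For $B(t)$, the naive Taylor expansion of the exponential fails because $p(u(p)-u(0))t$ is not uniformly small. Instead, I would perform the change of variables $q=pu(p)$ in the first exponential (globally invertible since $u^{\textnormal{g}}=dq/dp>0$) and $q=pu(0)$ in the second, obtaining
\[
B(t) = \strokedint_{-\io}^{\io}\frac{dq}{2\pi}\,\frac{e^{iqt}}{q}\, h(q),
\qquad
h(q) = \frac{q}{p(q)\,u^{\textnormal{g}}(p(q))} - 1.
\]
Using $q/p\to u(0)$ and $u^{\textnormal{g}}(0)=u(0)$ one checks $h(0)=0$, and the Taylor expansion $q=pu(0)+p^2 u'(0)+O(p^3)$ gives $h(q)=-2u'(0)u(0)^{-2}q+O(q^2)$ near the origin, so $h(q)/q$ is continuous there and the principal value is in fact an ordinary integral. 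Another integration by parts in $q$ yields $|B(t)|\le C_B/t$, provided $h(q)/q$ has an $L^1$ derivative and vanishes at infinity; both follow from the $C^2$-smoothness and integrability of $u$, $u'$, $u''$ together with $u^{\textnormal{g}}>0$.

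The main obstacle is the analysis of $B(t)$: one has to verify that the change of variables produces an amplitude regular at $q=0$ (a removable-singularity argument via the Taylor expansion above) and with controlled behavior as $q\to\pm\io$ (which is where the integrability assumptions on $u$ and its derivatives enter). Once this is in place, combining the bounds $|A(t)|\le C_A/t$ and $|B(t)|\le C_B/t$ proves \eqref{FF0Ct} with $C=C_A+|f(0)|C_B$.
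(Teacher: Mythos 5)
Your proof is correct and uses the same two key tools as the paper --- the globally invertible substitution $q=pu(p)$ (valid because $u^{\textnormal{g}}>0$) and a single integration by parts --- but organizes them differently. The paper performs the substitution $u(0)\tilde p=pu(p)$ on \emph{all} of $F(t)$, defines $\tilde f(\tilde p)=f(p)\,u(p)/u^{\textnormal{g}}(p)$ so that $\tilde f(0)=f(0)$, and writes $F(t)-F_0(t)$ as one integral with amplitude $(\tilde f(\tilde p)-\tilde f(0))/\tilde p$, regular at the origin and with linear phase $e^{i\tilde p u(0)t}$; one integration by parts then finishes the proof. You instead split off the regular piece $A(t)$ first and treat it by direct non-stationary-phase integration by parts in the original variable, reserving the substitution for the singular kernel difference $B(t)$; note that your $h(q)=q/(p\,u^{\textnormal{g}}(p))-1=u(p)/u^{\textnormal{g}}(p)-1$ is exactly the paper's $\tilde f/f-1$ for constant $f$, so your $B$-analysis is the paper's argument specialized to $f\equiv 1$. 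The paper's version is marginally more economical (one integration by parts instead of two, and no separate removable-singularity check for $h$), while yours makes more explicit where each hypothesis enters ($u^{\textnormal{g}}>0$ for the phase, the $L^1$ conditions for the boundary terms and the remaining integrals). One harmless slip: your Taylor expansion of $h$ near $q=0$ should give $h(q)=-u'(0)u(0)^{-2}q+O(q^2)$ rather than $-2u'(0)u(0)^{-2}q+O(q^2)$ (since $(1+a)/(1+2a)=1-a+O(a^2)$); only $h(0)=0$ and the differentiability of $h$ at the origin are actually used, so the conclusion is unaffected.
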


\begin{proof}[Proof of \eqref{R_I_A_asymptotics} and \eqref{R0_I0_Ar0}]
If $\Wint(x) = 1/2 - \th(x)$, or our regularized version thereof, the Fourier transform of the external field is $\hat{\Wint}(p) = i/p$ or tends to this as $p \to 0$.
Letting $u(p) = v_{\l'}(p)$ and choosing $f(p)$ as the function corresponding to $R(x,t)$ in \eqref{Rxt}, $I(x,t)$ in \eqref{Ixt}, and $A_{r}(x,y,t)$ in \eqref{Arxyt}, respectively, the conditions in \eqref{V_conditions_2} imply that the conditions in \eqref{f_conditions} and \eqref{u_conditions} are satisfied; this follows using \eqref{vL} and \eqref{Luttinger_parameter}.
We now note that
\be
\strokedint_{-\io}^{\io} \frac{dp}{2\pi} \frac{1}{p} \cos(p u(0)t) = 0,
\qquad
\strokedint_{-\io}^{\io} \frac{dp}{2\pi} \frac{1}{p} \sin(p u(0)t) = \frac{1}{2}
\quad
\forall t > 0.
\label{Asymptotics_lemma_helper}
\ee
The identities in \eqref{R_I_A_asymptotics} and \eqref{R0_I0_Ar0} then follow from Lemma~\ref{Lemma:Asymptotics} and \eqref{Asymptotics_lemma_helper} by splitting $e^{ip v_{\l'}(p)t}$ into real and imaginary parts.
\end{proof}

\begin{proof}[Proof of \eqref{l_lp_two_point_func_t_to_io}]
We set $u(p) = 2v_{\l'}(p)$ and $f(p) = 2\pi \g_{\l,\l'}(p) \left( \cos p(x-y) - 1 \right)$ in Lemma~\ref{Lemma:Asymptotics} and note that the conditions in \eqref{V_conditions_2} imply that the conditions in \eqref{f_conditions} and \eqref{u_conditions} are satisfied; this follows using \eqref{vL}, \eqref{Luttinger_parameter}, and $\g_{\l,\l'}(p)$ in \eqref{eta_gamma_K_Kp}.
It then follows from Lemma~\ref{Lemma:Asymptotics} and \eqref{Asymptotics_lemma_helper}, by splitting $e^{2ip v_{\l'}(p)t}$ into real and imaginary parts, that the time-dependence in the integral in \eqref{l_lp_two_point_func} disappears as $t \to \io$.
\end{proof}

\begin{proof}[Proof of Lemma~\ref{Lemma:Asymptotics}]
Define $\tp$ by the change of variables
\be
u(0)\tp = u(p)p,
\qquad
u(0) d\tp = u^{\textnormal{g}}(p) dp.
\ee
Also, define $\tf(\tp)$ by
\be
\frac{\tf(\tp) d\tp}{\tp} = \frac{f(p) dp}{p},
\ee
which implies
\be
\tf(\tp) = f(p) \frac{u(p)}{u^{\textnormal{g}}(p)},
\qquad
\tf(0) = f(0).
\ee
By relabeling $p$ by $\tp$ in $F_0(t)$, we find
\be
F(t) - F_0(t)
= \strokedint_{-\io}^{\io} \frac{d\tp}{2\pi} \frac{\tf(\tp) - \tf(0)}{\tp} e^{i\tp u(0)t},
\ee
and using integration by parts,
\be
F(t) - F_0(t)
= \left[ \frac{1}{2\pi} \frac{\tf(\tp) - \tf(0)}{\tp} \frac{-ie^{i\tp u(0)t}}{u(0)t} \right]_{-\io}^{\io}
+ \strokedint_{-\io}^{\io} \frac{d\tp}{2\pi}
	\frac{d}{d\tp} \left( \frac{\tf(\tp) - \tf(0)}{\tp} \right)
	\frac{ie^{i\tp u(0)t}}{u(0)t}.
\ee
This implies that \eqref{FF0Ct} holds with
\be
C = \frac{1}{u(0)} \strokedint_{-\io}^{\io} \frac{d\tp}{2\pi}
\left| \frac{d}{d\tp} \left( \frac{\tf(\tp) - \tf(0)}{\tp} \right) \right| < \io
\ee
since the conditions on $f(p)$ in \eqref{f_conditions} and $u(p)$ in \eqref{u_conditions} imply that the integrand is $L^1$: this follows from a straightforward computation showing that
\be
\frac{\tf(\tp) - \tf(0)}{\tp}
= \left(
	\frac{f(p) - f(0)}{p} - f(0) \frac{d}{dp} \ln \left( \frac{u(p)}{u(0)} \right) \right) \frac{dp}{d\tp}
\ee
and
\begin{multline}
\frac{d}{d\tp} \left( \frac{\tf(\tp) - \tf(0)}{\tp} \right)
= \frac{d}{dp} \left(
	\frac{f(p) - f(0)}{p} - f(0) \frac{d}{dp} \ln \left( \frac{u(p)}{u(0)} \right)
\right) \left( \frac{dp}{d\tp} \right)^2 \\
+ \left(
	\frac{f(p) - f(0)}{p} - f(0) \frac{d}{dp} \ln \left( \frac{u(p)}{u(0)} \right)
\right) \frac{d^2p}{d\tp^2}
\end{multline}
with ${dp}/{d\tp} = {u(0)}/{u^{\textnormal{g}}(p)}$ and ${d^2p}/{d\tp^2} = {d}( {u(0)^2}/{2u^{\textnormal{g}}(p)^2} )/{dp}$.
(The latter contains the second derivative of $u(p)$, which explains why the condition on $d^2u(p)/dp^2$ in \eqref{u_conditions} is needed.)
\end{proof}


\section{General interactions}
\label{App:General_interactions}
We consider an equilibrium model of the same form as the one in Sec.~\ref{Sec:Luttinger_model_with_constant_chemical_potentials} but for general $g_2$- and $g_4$-interactions in \emph{g-ology}; see, e.g., \cite{V}.
This model has different chemical potentials for right- and left-moving fermions and $\l\Hint$ is decomposed into $g_2 H_{2}$ (interaction between fermions moving in opposite directions) and $g_4 H_{4}$ (interaction between fermions moving in the same direction):
\be
\begin{aligned}
	H & = H_{0} + g_2 H_{2} + g_4 H_{4} - \sum_{r} (\m_{r} - \m_0) Q_{r}, \\
	H_{2} & = \sum_{r} \frac{1}{L} \hat{\Vint}_2(0) Q_{r} Q_{-r}
		+ \sum_{r} \sum_{p>0} \frac{1}{L} \hat{\Vint}_2(p) \left(
			\r_{r}(p)\r_{-r}(-p) + \r_{r}(-p)\r_{-r}(p)
		\right), \\
	H_{4} & = \sum_{r} \frac{1}{L} \hat{\Vint}_4(0) Q_{r}^2
		+ \sum_{r} \sum_{p>0} \frac{1}{L} \hat{\Vint}_4(p) \left(
			\r_{r}(p)\r_{r}(-p) + \r_{r}(-p)\r_{r}(p)
		\right)
\end{aligned}
\label{general_hamiltonian}
\ee
with $\m_\pm$ constant chemical potentials satisfying $\m_+ + \m_- = 2\m_0$; the conditions on the $g_2$- and $g_4$-interactions can be shown to be $\hat{\Vint}_i(p) = \hat{\Vint}_i(-p)$ for $i=2,4$, $|g_2 \hat{\Vint}_2(p)| < \pi\vF + g_4 \hat{\Vint}_4(p)$ for all $p$, and
\be
\sum_{p > 0}
\frac{p g_2^2 \hat{\Vint}_2(p)^2}{\pi\vF \left( \pi\vF + g_4\hat{\Vint}_4(p) \right)} < \io.
\ee
We also find it convenient to introduce the following shorthand notation:
\be
\tg_2(p) = g_2 \hat{\Vint}_2(p)/\pi\vF,
\qquad
\tg_4(p) = g_4 \hat{\Vint}_4(p)/\pi\vF,
\qquad
\tm_{r} = (\m_{r} - \m_0)/\pi\vF.
\ee
The renormalized Fermi velocity and Luttinger parameter can then be written
\be
\begin{aligned}
v_{g_2,g_4}(p) & = \vF \sqrt{(1+\tg_4(p))^2 - \tg_2(p)^2}, \\
K_{g_2,g_4}(p) & = \sqrt{({1 + \tg_4(p) - \tg_2(p)})/({1 + \tg_4(p) + \tg_2(p)})},
\end{aligned}
\label{vL_and_Luttinger_parameter_for_g2_g4}
\ee
respectively, for general $g_2$- and $g_4$-interactions \cite{V}.
Moreover, similar to Sec.~\ref{Sec:Luttinger_model_with_constant_chemical_potentials}, the possible values of $\m_\pm - \m_0$ must be integer multiples of an interaction dependent constant (see \eqref{general_w}), but as before this will be of no consequence in the thermodynamic limit.

We define a current which is consistent with the continuity equation implied by $H$.
Using $\r_{r}(p,t) = e^{iH_{\l} t} \r_{r}(p) e^{-iH t}$ we have $\partial_t \r_{r}(p,t) = ie^{iH t} [H, \r_{r}(p)] e^{-iH t}$, where $[H_{0}, \r_{r}(p)]$ is given in \eqref{H0_rsp}, $[H_{2}, \r_{r}(p)] = - r\vF p (\hat{\Vint}_2(p)/\pi\vF) \r_{-r}(p)$, and $[H_{4}, \r_{r}(p)] = - r\vF p (\hat{\Vint}_4(p)/\pi\vF) \r_{r}(p)$.
It follows that
\be
\partial_t \left( \r_+(p,t) + \r_-(p,t) \right)
	+ ip \vF
		\left( 1 + \tg_4(p) - \tg_2(p) \right)
		\left( \r_+(p,t) - \r_-(p,t) \right) = 0.
\ee
This is the continuity equation in Fourier space if we define the total density and the current as $\r(p) = \r_+(p) + \r_-(p)$ and
\be
j(p)
= \vF \left( 1 + \tg_4(p) - \tg_2(p) \right) (\r_+(p) - \r_-(p))
= K_{g_2,g_4}(p) v_{g_2,g_4}(p) (\r_+(p) - \r_-(p)),
\label{general_current}
\ee
respectively.
Setting $g_2 = g_4 = \l$ and $\hat{\Vint}_2(p) = \hat{\Vint}_4(p) = \hat{\Vint}(p)$, we recover $\l \Hint = g_2 H_{2} + g_4 H_{4}$ and $j(x) = \vF (\r_+(x) - \r_-(x))$ since $v_{\l}(p)$ in \eqref{vL} and  $K_{\l}(p)$ in \eqref{Luttinger_parameter} satisfy $K_{\l}(p) v_{\l}(p) = \vF$.
This proves that the total density and the current in the main text are consistent with the corresponding continuity equation.

Let $|\Psi\rangle$ denote the ground state of $H$ and consider the average current
\be
I
= L^{-1} \int_{-L/2}^{L/2} dx\, \langle \Psi| j(x) |\Psi \rangle
= L^{-1} \langle \Psi| j(p = 0) |\Psi \rangle,
\ee
which corresponds to the steady current in the main text.
Only the zero modes contribute to this current since
\be
I = \frac{K_{g_2,g_4}v_{g_2,g_4}}{L} \langle\Psi| \left( Q_+ - Q_- \right) |\Psi\rangle,
\label{asymptotic_model_current}
\ee
where $v_{g_2,g_4} = v_{g_2,g_4}(0)$ and $K_{g_2,g_4} = K_{g_2,g_4}(0)$, using $j(p)$ in \eqref{general_current}.
Similar to Sec.~\ref{Sec:Luttinger_model_with_constant_chemical_potentials}, the Hamiltonian $H$ can be diagonalized using a unitary operator $e^{iS_{g_2,g_4}}$, given by
\be
S_{g_2,g_4} = i \sum_{p \neq 0} \frac{2\pi}{L} \frac{\ph_{g_2,g_4}(p)}{p} \r_{+}(-p)\r_-(p),
\qquad
\tanh 2\ph_{g_2,g_4}(p) = -\frac{\tg_2(p)}{1 + \tg_4(p)}
\quad
\forall p \neq 0,
\label{S_g2_g4}
\ee
and large gauge transformations implemented by the unitary operators $R_\pm$ defined in Sec.~\ref{Sec:Exact_solution_of_the_Luttinger_model}.
It follows that the ground state of $H$ can be written as $|\Psi\rangle = R_+^{w_+} R_-^{-w_-} |\Psi_{g_2,g_4}\rangle$ with $|\Psi_{g_2,g_4}\rangle = e^{-iS_{g_2,g_4}} |\Psi_0\rangle$.
Since we are interested in computing $\langle\Psi| \left( Q_+ - Q_- \right) |\Psi\rangle = w_+ - w_-$, it suffices to consider the zero-mode part of $|\Psi\rangle$.
We therefore consider only the zero-mode part of $H$,
\be
H_{Q}
= \sum_{r} \frac{\pi\vF}{L}
	\left( \left( 1 + \tg_4(0) \right) Q_{r}^2 + \tg_2(0) Q_{r}Q_{-r} - L\tm_{r} Q_{r} \right).
\ee
The large gauge transformations are defined so that they remove all terms which are linear in the charge operators $Q_\pm$,
\begin{multline}
R_+^{-w_+} R_-^{w_-} H_{Q} R_+^{w_+} R_-^{-w_-}
= H_{Q}
- \sum_{r} \frac{\pi\vF}{L}
	\left(
		\left( 1 + \tg_4(0) \right) w_{r} + \tg_2(0) w_{-r}
	\right) w_{r} \\
+ 2 \sum_{r} \frac{\pi\vF}{L}
	\left(
		\left( 1 + \tg_4(0) \right) w_{r} + \tg_2(0) w_{-r} - L\tm_{r}/2
	\right) (Q_{r} + w_{r}),
\end{multline}
which, using $\m_+ + \m_- = 2\m_0$, implies that $w_\pm$ must be chosen as
\be
w_\pm
= \frac{L\tm_\pm/2}{1 + \tg_4(0) - \tg_2(0)}
= \frac{L(\m_{r} - \m_0)}{2\pi K_{g_2,g_4}v_{g_2,g_4}}.
\label{general_w}
\ee
From this and \eqref{asymptotic_model_current}, using $\langle\Psi| \left( Q_+ - Q_- \right) |\Psi\rangle = w_+ - w_-$, we find that $K_{g_2,g_4}v_{g_2,g_4}$ cancels:
\be
I = \frac{K_{g_2,g_4}v_{g_2,g_4}}{L} \frac{L(\m_+ - \m_-)}{2\pi K_{g_2,g_4}v_{g_2,g_4}}
	= \frac{\m_+ - \m_-}{2\pi}.
\ee
This corresponds to a universal conductance $G = {1}/{2\pi}$, i.e., the conductance quantum found in the main text; the computation given here is essentially an alternative derivation of the result in \cite{ACF2}, but for a less general family of Hamiltonians.

We also show that it is possible to read the chemical potential difference from the two-point correlation function, as done in the main text.
Indeed, from \eqref{large_gauge_transf}, it follows that
\be
\langle\Psi| \psi^+_{r}(x) \psi^-_{r}(y) |\Psi\rangle
= e^{\mp ir w_{r} 2\pi (x-y)/L}
	\langle\Psi_{g_2,g_4}| \psi^+_{r}(x) \psi^-_{r}(y) |\Psi_{g_2,g_4}\rangle,
\ee
which, using \eqref{general_w}, implies that the two-point correlation function is
\be
\langle\Psi| \psi^+_{r}(x) \psi^-_{r}(y) |\Psi\rangle
= e^{-ir(K_{g_2,g_4}v_{g_2,g_4})^{-1} (\m_{r}-\m_0) (x-y)}
	\langle\Psi_{g_2,g_4}| \psi^+_{r}(x) \psi^-_{r}(y) |\Psi_{g_2,g_4}\rangle.
\ee
The latter is of the same form as \eqref{psi_psi_correl_2} but with the factor $(K_{g_2,g_4}v_{g_2,g_4})^{-1}$ replacing $\vF^{-1}$ in the phase on the right-hand side.
In particular, setting $g_2 = g_4 = \l$ and $\hat{\Vint}_2(p) = \hat{\Vint}_4(p) = \hat{\Vint}(p)$ we recall that $K_{\l}v_{\l} = \vF$, while, in general, the new factor cancels the factor $K_{g_2,g_4}v_{g_2,g_4}$ in the current (given by \eqref{general_current} for $p = 0$).
This shows that universality of conductance holds for general $g_2$- and $g_4$-interactions and not only for the particular interaction in the main text.

As a final remark, we mention that the formulas given in the main text for the evolution of the Luttinger model following a quench can be shown to remain valid for general $g_2$- and $g_4$-interactions if one replaces $v_{\l}(p)$ and $K_{\l}(p)$ with $v_{g_2,g_4}(p)$ and $K_{g_2,g_4}(p)$ defined in \eqref{vL_and_Luttinger_parameter_for_g2_g4} (the other necessary changes are explained above).
In particular, we note that the relation in \eqref{Luttinger_parameter_ph} holds replacing $\ph_{\l}(p)$ with $\ph_{g_2,g_4}(p)$ in \eqref{S_g2_g4} but that the relation in \eqref{vL_ph} does \emph{not} generalize.


\end{appendix}




\begin{thebibliography}{44}
\markboth{\refname}{}

\bibitem{EFG} J. Eisert, M. Friesdorf, and C. Gogolin, ``Quantum many-body systems out of equilibrium,'' Nature Physics {\bf 11}, 124 (2015).

\bibitem{PSSV} A. Polkovnikov, K. Sengupta, A. Silva, and M. Vengalattore, ``Colloquium: Nonequilibrium dynamics of closed interacting quantum systems,'' Rev. Mod. Phys. {\bf 83}, 863 (2011).

\bibitem{GME} C. Gogolin, M. P. M{\"u}ller, and J. Eisert, ``Absence of thermalization in nonintegrable systems,'' Phys. Rev. Lett. {\bf 106}, 040401 (2011).

\bibitem{GHLT} S. Goldstein, D. A. Huse, J. L. Lebowitz, and R. Tumulka, ``Thermal equilibrium of a macroscopic quantum system in a pure state,'' Phys. Rev. Lett. {\bf 115}, 100402 (2015).

\bibitem{RDYO} M. Rigol, V. Dunjko, V. Yurovsky, and M. Olshanii, ``Relaxation in a completely integrable many-body quantum system: An ab initio study of the dynamics of the highly excited states of 1D lattice hard-core bosons,'' Phys. Rev. Lett. {\bf 98}, 050405 (2007).

\bibitem{SL} H. Spohn and J. L. Lebowitz, ``Stationary non-equilibrium states of infinite harmonic systems,'' Commun. Math. Phys. {\bf 54}, 97 (1977).

\bibitem{JP1} V. Jak\v{s}i\'{c} and C.-A. Pillet, ``Non-equilibrium steady states of finite quantum systems coupled to thermal reservoirs,'' Commun. Math. Phys. {\bf 226}, 131 (2002).

\bibitem{JP2} V. Jak\v{s}i\'{c} and C.-A. Pillet, ``Mathematical theory of non-equilibrium quantum statistical mechanics,'' J. Stat. Phys. {\bf 108}, 787 (2002).

\bibitem{AJPP} W. Aschbacher, V. Jak\v{s}i\'{c}, Y. Pautrat, and C.-A. Pillet, ``Transport properties of quasi-free fermions,'' J. Math. Phys. {\bf 48}, 032101 (2007).

\bibitem{Dh} A. Dhar, ``Heat transport in low-dimensional systems,'' Adv. Phys. {\bf 57}, 457 (2008).

\bibitem{BLRb} F. Bonetto, J. L. Lebowitz, and L. Rey-Bellet, ``Fourier's law: a challenge to theorists,'' in {\em Mathematical Physics}, edited by A. Fokas et al. (Imperial College Press, 2000), pp. 128--150.

\bibitem{LLP} S. Lepri, R. Livi, and A. Politi, ``Thermal conduction in classical low-dimensional lattices,'' Phys. Rep. {\bf 377}, 1 (2003).

\bibitem{BO} G. Basile and S. Olla, ``Energy diffusion in harmonic system with conservative noise,'' J. Stat. Phys. {\bf 155}, 1126 (2014).

\bibitem{BOS} G. Basile, S. Olla, and H. Spohn, ``Energy transport in stochastically perturbed lattice dynamics,'' Arch. Ration. Mech. Anal. {\bf 195}, 171 (2009).

\bibitem{RLL} Z. Rieder, J. L. Lebowitz, and E. Lieb, ``Properties of a harmonic crystal in a stationary nonequilibrium state,'' J. Math. Phys. {\bf 8}, 1073 (1967).

\bibitem{Li} G. Lindblad, ``On the generators of quantum dynamical semigroups,'' Commun. Math. Phys. {\bf 48}, 119 (1976).

\bibitem{LMP} S. Lepri, C. Mej{\'i}a-Monasterio, and A. Politi, ``A stochastic model of anomalous heat transport: analytical solution of the steady state,'' J. Phys. A: Math. Theor. {\bf 42}, 025001 (2009).

\bibitem{AjHu} O. Ajanki and F. Huveneers, ``Rigorous scaling law for the heat current in disordered harmonic chain,'' Commun. Math. Phys. {\bf 301}, 841 (2011).

\bibitem{ARRS} T. Antal, Z. R{\'a}cz, A. R{\'a}kos, and G. M. Sch{\"u}tz, ``Transport in the XX chain at zero temperature: Emergence of flat magnetization profiles,'' Phys. Rev. E {\bf 59}, 4912 (1999).

\bibitem{ABGM2} D. B. Abraham, E. Barouch, G. Gallavotti, and A. Martin-L\"{o}f, ``Dynamics of a local perturbation in the X-Y model II---Excitations,'' Studies in Applied Mathematics {\bf 51}, 211 (1972).

\bibitem{LN} W. Liu and N. Andrei, ``Quench dynamics of the anisotropic Heisenberg model,'' Phys. Rev. Lett. {\bf 112}, 257204 (2014).

\bibitem{SM} T. Sabetta and G. Misguich, ``Nonequilibrium steady states in the quantum XXZ spin chain,'' Phys. Rev. B {\bf 88}, 245114 (2013).

\bibitem{LaMi} J. Lancaster and A. Mitra, ``Quantum quenches in an XXZ spin chain from a spatially inhomogeneous initial state,'' Phys. Rev. E {\bf 81}, 061134 (2010).

\bibitem{Lu} J. M. Luttinger, ``An exactly soluble model of a many-fermion system,'' J. Math. Phys. {\bf 4}, 1154 (1963).

\bibitem{To} S. Tomonaga, ``Remarks on Bloch's method of sound waves applied to many-fermion problems,'' Prog. Theor. Phys. {\bf 5}, 544 (1950).

\bibitem{ML} D. C. Mattis and E. H. Lieb, ``Exact solution of a many-fermion system and its associated boson field,'' J. Math. Phys. {\bf 6}, 304 (1965).

\bibitem{C} M. A. Cazalilla, ``Effect of suddenly turning on interactions in the Luttinger model,'' Phys. Rev. Lett. {\bf 97}, 156403 (2006).

\bibitem{IC} A. Iucci and M. A. Cazalilla, ``Quantum quench dynamics of the Luttinger model,'' Phys. Rev. A {\bf 80}, 063619 (2009).

\bibitem{MW} V. Mastropietro and Z. Wang, ``Quantum quench for inhomogeneous states in the nonlocal Luttinger model,'' Phys. Rev. B {\bf 91}, 085123 (2015).

\bibitem{Ha} F. D. M. Haldane, ```Luttinger liquid theory' of one-dimensional quantum fluids. I. Properties of the Luttinger model and their extension to the general 1D interacting spinless Fermi gas,'' J. Phys. C: Solid State Phys. {\bf 14}, 2585 (1981).

\bibitem{V} J. Voit, ``One-dimensional Fermi liquids,'' Rep. Prog. Phys. {\bf 58}, 977 (1995).

\bibitem{BGM} G. Benfatto, G. Gallavotti, and V. Mastropietro, ``Renormalization group and the Fermi surface in the Luttinger model,'' Phys. Rev. B {\bf 45}, 5468 (1992).

\bibitem{LaMo} E. Langmann and P. Moosavi, ``Construction by bosonization of a fermion-phonon model,'' J. Math. Phys. {\bf 56}, 091902 (2015).

\bibitem{KF} C. L. Kane and M. P. A. Fisher, ``Transport in a one-channel Luttinger liquid,'' Phys. Rev. Lett. {\bf 68}, 1220 (1992).

\bibitem{La2} R. Landauer, ``Electrical resistance of disordered one-dimensional lattices,'' Philos. Mag. {\bf 21}, 863 (1970).

\bibitem{ACF2} A. Y. Alekseev, V. V. Cheianov, and J. Fr{\"o}hlich, ``Universality of transport properties in equilibrium, the Goldstone theorem, and chiral anomaly,'' Phys. Rev. Lett. {\bf 81}, 3503 (1998).

\bibitem{ACF1} A. Y. Alekseev, V. V. Cheianov, and J. Fr{\"o}hlich, ``Comparing conductance quantization in quantum wires and quantum Hall systems,'' Phys. Rev. B {\bf 54}, R17320 (1996).

\bibitem{MS} D. L. Maslov and M. Stone, ``Landauer conductance of Luttinger liquids with leads,'' Phys. Rev. B {\bf 52}, R5539 (1995).

\bibitem{K} A. Kawabata, ``On the renormalization of conductance in Tomonaga-Luttinger liquid,'' J. Phys. Soc. Jpn. {\bf 65}, 30 (1996).

\bibitem{RS} M. Reed and B. Simon, {\em Methods of Modern Mathematical Physics, Vol. I. Functional Analysis} (Academic Press, 1972).

\bibitem{dWoLa} J. de Woul and E. Langmann, ``Exact solution of a 2{D} interacting fermion model,'' Commun. Math. Phys. {\bf 314}, 1 (2012).

\bibitem{CaHu} A. L. Carey and C. A. Hurst, ``A note on the boson-fermion correspondence and infinite dimensional groups,'' Commun. Math. Phys. {\bf 98}, 435 (1985).

\bibitem{BJLP} L. Bruneau, V. Jak\v{s}i\'{c}, Y. Last, and C.-A. Pillet, ``Landauer-B{\"u}ttiker and Thouless conductance,'' Commun. Math. Phys. {\bf 338}, 347 (2015).

\end{thebibliography}
\end{document}